\documentclass[10pt]{article}
\usepackage{graphicx,indentfirst}
\usepackage{amssymb,amsmath,amsthm,algorithm,algorithmic,enumerate}

\textwidth = 6.5 in
\textheight = 9 in
\oddsidemargin = 0.0 in
\evensidemargin = 0.0 in
\topmargin = 0.0 in
\headheight = 0.0 in
\headsep = 0.0 in
\parskip = 0.0in
\parindent = 0.2in

\newtheorem{theorem}{Theorem}

\newtheorem{prop}[theorem]{Proposition}
\newtheorem{lemma}[theorem]{Lemma}
\newtheorem{claim}[theorem]{Claim}
\newcommand{\opt}{\mathit{OPT}}
\newcommand{\bigR}{\mathbb{R}}

\newcommand{\lpt}{\textsl{LPT* }}
\newcommand{\ep}[1]{\mathbb{E}[#1]}
\newcommand{\dx}{\ dx}
\newcommand{\dt}{\ dt}
\newcommand{\da}{\ d\alpha}
\newcommand{\qcmax}{Q||C_{max}}
\newcommand{\etal}{et$.$ al$.$ }
\title{Lower Bound for Envy-Free and Truthful Makespan Approximation on Related Machines
\footnote{This work was supported in part by NSF grants CCF-0728869 and CCF-1016778. Contact information: \{lkf,zhenghui\}@cs.dartmouth.edu. Dept of Computer Science, Dartmouth College, Hanover, NH 03755, USA.
 An extended abstract will appear in SAGT 2011.}
}
\author{Lisa Fleischer
%\addtocounter{footnote}{-1}
 \and Zhenghui Wang
%\footnote{\{lkf,zhenghui\}@cs.dartmouth.edu. Department of Computer Science, Dartmouth College, USA.}
}
\date{July 14, 2011}
\begin{document}
\maketitle
\begin{abstract}
We study problems of scheduling jobs on related machines so as to minimize the makespan in the setting where machines are strategic agents. In this problem, each job $j$ has a length $l_{j}$ and each machine $i$ has a private speed $t_{i}$. The running time of job $j$ on machine $i$ is $t_{i}l_{j}$. We seek a mechanism that obtains speed bids of machines and then assign jobs and payments to machines so that the machines have incentive to report true speeds and the allocation and payments are also envy-free. We show that
\begin{enumerate}
\item A deterministic envy-free, truthful, individually rational, and anonymous mechanism cannot approximate the makespan strictly better than $2-1/m$, where $m$ is the number of machines. This result contrasts with prior work giving a deterministic PTAS for envy-free  anonymous assignment and a distinct deterministic PTAS for truthful anonymous mechanism.
\item For two machines of different speeds, the unique deterministic scalable allocation of any envy-free, truthful, individually rational, and anonymous mechanism is to allocate all jobs to the quickest machine. This allocation is the same as that of the VCG mechanism, yielding a 2-approximation to the minimum makespan.
\item
No payments can make any of the prior published monotone and locally efficient allocations that yield better than an $m$-approximation for $\qcmax$~\cite{aas, at,ck10, dddr, kovacs} a truthful, envy-free, individually rational, and anonymous mechanism.
\end{enumerate}
\end{abstract}
\newpage

\setlength{\parskip}{0.3\baselineskip}
\section{Introduction}
We study problems of scheduling jobs on related machines so as to minimize the makespan (i.e. $\qcmax$) in a strategic environment. Each job $j$ has a length $l_{j}$ and each machine $i$ has a \emph{private} speed $t_{i}$, which is only known by that machine. The speed $t_{i}$ is the time it takes machine $i$ to process one unit length of a job --- $t_{i}$ is the inverse of the usual sense of speed. The running time of job $j$ on machine $i$ is $t_{i}l_{j}$. A single job cannot be performed by more than one machine (indivisible), but multiple jobs can be assigned to a single machine. The \emph{workload} of a machine is the total length of jobs assigned to that machine and the \emph{cost} is the running time of its workload. The scheduler  would like to schedule jobs to complete in minimum time, but has to pay machines to run jobs. The \emph{utility} of a machine is the difference between the payment to the machine and its cost. The mechanism used by the scheduler asks the machines for their speeds and then determines an allocation of jobs to machines and payments to machines. Ideally, the mechanism should be fair and efficient. To accomplish this, the following features of mechanism are desirable.

\begin{description}
\item[Individually rational]
A mechanism is \emph{individually rational} (IR), if no agent gets negative utility when reporting his true private information, since a rational agent will refuse the allocation and payment if his utility is negative.  In order that each machine accepts its allocation and payment, the payment to a machine should exceed its cost of executing the jobs.
\item[Truthful]
A mechanism is \emph{truthful} or  \emph{incentive compatible} (IC), if each agent maximizes his utility by reporting his true private information.  Under truth-telling, it is easier for the designer to design and analyze mechanisms, since agents' dominant strategies are known by the designer. In a truthful mechanism, an agent does not need to compute the strategy maximizing his utility, since it is simpler to report his true information.
\item[Envy-free]
A mechanism is \emph{envy-free} (EF), if no agent can improve his utility by switching his allocation and payment with that of another. Envy-freeness is a strong concept of fairness~\cite{ds61,F67}: each agent is happiest with his allocation and payment.
\end{description}

Prior work on envy-free mechanisms for makespan approximation problems assumes that all machine speeds are public knowledge~\cite{cohen10,ms}. We assume that the speed of a machine is private information of that machine. This assumption makes it harder to achieve envy-freeness. Only if the mechanism is also truthful, can the mechanism designer ensure that the allocation is truly envy-free.

In this paper, we prove results about \emph{anonymous} mechanisms. A mechanism is anonymous, roughly speaking, if  when two agents switch their bids, their allocated jobs and payments  also switch. This means the allocation and payments depend only on the agents' bids, not on their names. Anonymous mechanisms are of interest in this problem for two reasons. On the one hand, to the best of our knowledge, all polynomial-time mechanisms for $\qcmax$ are anonymous \cite{aas, at,ck10, dddr,ms}.
On the other hand, in addition to envy-freeness, anonymity can be viewed as an additional characteristic of fairness~\cite{adl}.

We also study \emph{scalable} allocations. Scalability means that multiplying  the speeds by the same positive constant does not change the allocation. Intuitively, the allocation function should not depend on the ``units'' in which the speed are measured, and hence scalability is a natural notion. But allocations based on rounded speeds of machines are typically not scalable~\cite{aas,ck10,kovacs}.

The truthful mechanisms and envy-free mechanisms for $\qcmax$ are both well-understood.
There is a payment scheme to make an allocation truthful if and only if the allocation is \emph{monotone decreasing}~\cite{at}. For $\qcmax$, an allocation is monotone decreasing if no machine gets more workload by bidding a slower speed than its true speed. On the other hand, a mechanism for $\qcmax$ can be envy-free if and only if its allocation is \emph{locally efficient}~\cite{ms}. An allocation is locally efficient if a machine never gets less workload than a slower one.

The complexity of truthful mechanisms and, separately, envy-free mechanisms have been completely settled. $\qcmax$ is strongly NP-hard, so there is no FPTAS for this problem, assuming P $\neq$ NP. On the other hand, there is a deterministic monotone  PTAS~\cite{ck10} and a distinct deterministic locally efficient PTAS~\cite{ms}. This implies the existence of truthful mechanisms and distinct envy-free mechanisms that approximate the makespan arbitrarily closely. However, neither of these payment functions make the mechanisms both truthful and envy-free.

The VCG mechanism for $\qcmax$ is truthful, envy-free, individually rational, and anonymous~\cite{cffko}. However, since the VCG mechanism maximizes the social welfare (i.e. minimizing the total running time), it always allocates all jobs to the quickest machines, yielding  a $m$-approximation of makespan for $m$ machines in the worst case. So a question is whether there is a truthful, envy-free, individually rational and anonymous mechanism that approximates the makespan better than the VCG mechanism. Since there already exists many allocation functions that are both monotone and locally efficient, one natural step to answer this question could be checking whether some of these allocation functions admit truthful and envy-free payments.\par

\paragraph{Our Results.} We show that
\begin{enumerate}
\item A deterministic envy-free, truthful, individually rational, and anonymous mechanism cannot approximate the makespan strictly better than $2-1/m$, where $m$ is the number of machines. (Section 3). This result contrasts with prior results~\cite{ck10,ms} discussed above.
\item For two machines of different speeds, the unique deterministic scalable allocation of any envy-free, truthful, individually rational, and anonymous mechanism is to allocate all jobs to the quickest machine. (Section 5). This allocation is the same as that of the VCG mechanism, yielding a 2-approximation of makespan for this case.
\item
No payments can make any of the prior published monotone and locally efficient allocations that yield better than an $m$-approximation for $\qcmax$~\cite{aas, at,ck10, dddr, kovacs} a truthful, envy-free, individually rational, and anonymous mechanism.
\end{enumerate}
\paragraph{Related Work.}\
Hochbaum and Shmoys~\cite{hs} give a  PTAS for $\qcmax$. Andelman, Azar, and Sorani \cite{aas} give a 5-approximation deterministic truthful mechanism. Kov\'{a}cs improves the approximation ratio to 3~\cite{kovacs} and then to 2.8~\cite{kovacs09}. Randomization has been successfully applied to this problem.
Archer and Tardos~\cite{at} give a 3-approximate randomized mechanism, which is improved to 2 in~\cite{Archer}. Dhangwatnotai~\cite{dddr} \etal give a monotone randomized PTAS. All these randomized mechanisms are truthful-in-expectation. However, we can show that no payment function can form a truthful, envy-free, individually rational and anonymous mechanism with any allocation function of these mechanisms. We give a proof for a deterministic allocation~\cite{kovacs} in Section 5 and another one for a randomized allocation~\cite{at} in Appendix B.

When players have different finite valuation spaces, it is known that a monotone and locally efficient allocation function may not admit prices to form a simultaneously truthful and envy-free mechanism for allocating goods among players~\cite{cffko2}. In this paper, we consider mechanisms where all players have identical infinite valuation spaces.

Cohen \etal\cite{cffko} study the truthful and envy-free mechanisms on combinatorial auctions with additive valuations where agents have a upper capacity on the number of items they can receive. They seek truthful and envy-free mechanisms that maximize social welfare and show that VCG with Clarke Pivot payments is envy-free if agents' capacities are all equal. Their result can be interpreted in our setting by viewing that each  agent has the same capacity $n$ and the valuation of each agent is the reverse of its cost. So their result implies that the VCG mechanism for $\qcmax$ is truthful and envy-free; but the VCG mechanism does not give a good approximation guarantee for makespan.
\section{Preliminaries}
There are $m$ machines and $n$ jobs. Each agent will report a bid $b_{i}\in \bigR$ to the mechanism. Let $t$ denote the vector of true speeds and $b$ the vector of bids.

A mechanism consists of a pair of functions $(w,p)$. An \emph{allocation} $w$ maps a vector of bids to a vector of allocated workload, where $w_{i}(b)$ is the workload of agent $i$. For all bid vectors $b$, $w(b)$ must correspond to a valid job assignment. An allocation $w$ is called \emph{scalable} if $w_{i}(b)=w_{i}(c\cdot b)$ for all bid vectors $b$,  all $i\in\{1\ldots m\}$ and all scalars $c>0$. A \emph{payment} $p$ maps a vector of bids to a vector of payments, i.e. $p_{i}(b)$ is the payment to agent $i$.

The \emph{cost} machine $i$ incurs by the assigned jobs is $t_{i}w_{i}(b)$. Machine $i$'s private value $t_{i}$ measures its cost per unit work. Each machine $i$ attempts to maximize its \emph{utility}, $u_{i}(t_{i},b):=p_{i}(b)-t_{i}w_{i}(b)$.

The \emph{makespan} of allocation $w(b)$ is defined as $\max_{i}\  w_{i}(b)\cdot t_{i}$. A mechanism $(w,p)$ is $c$-\emph{approximate} if  for all bids $b$ and values $t$, the makespan of the allocation given by $w$ is within $c$ times the makespan of the optimal allocation, i.e.,
$$\textrm{max}{}_{i}\  w_{i}(b)\cdot t_{i}\leq c\cdot \opt(t),$$
where $\opt(t)$ is the minimum makespan for machines with speeds $t$ .

Vector $b$ is sometimes written as $(b_{i},b_{-i})$, where $b_{-i}$ is the vector of bids, not including agent $i$. A mechanism $(w,p)$ is \emph{truthful} or \emph{incentive compatible}, if each agent $i$ maximizes his utility by bidding his true value $t_{i}$, i.e., for all agent $i$, all possible $t_{i},b_{i}$ and $b_{-i}$, $$p_{i}(t_{i},b_{-i})-t_{i}w_{i}({t_{i},b_{-i}})\geq p_{i}(b_{i},b_{-i})-t_{i}w_{i}({b_{i},b_{-i}})$$

A mechanism $(w,p)$ is \emph{individually rational}, if agents who bid truthfully never incur a negative utility, i.e. $u_{i}(t_{i},(t_{i},b_{-i}))\geq 0$ for all agents $i$, true value $t_{i}$ and other agents' bids $b_{-i}$.

A mechanism $(w,p)$ is \emph{envy-free} if no agent wishes to switch his allocation and payment with another. For all $i,j \in\{1,\ldots, m\}$ and all bids $b$, $$p_{i}(b)-b_{i}w_{i}(b)\geq p_{j}(b)-b_{i}w_{j}(b).$$Notice that we use bids $b$ instead of the true speeds $t$ in this definition, because a mechanism can determine the envy-free allocation only based on the bids. However, a mechanism can ensure the outcome is envy-free, only if it is also truthful.

A mechanism $(w,p)$ is anonymous if for every bid vector $b=(b_{1},\ldots,b_{m})$, every $k$ such that $b_{k}$ is unique and every $l\neq k$,
$$w_{l}(\ldots,b_{k-1},b_{l},b_{k+1},\ldots,b_{l-1},b_{k},b_{l+1},\ldots)=w_{k}(b)$$ and
$$p_{l}(\ldots,b_{k-1},b_{l},b_{k+1},\ldots,b_{l-1},b_{k},b_{l+1},\ldots)=p_{k}(b).$$

The condition that $b_{k}$ is unique  is important, because in some case the mechanism may have to allocate jobs of different lengths to agents with the same bids. If  mechanism $(w,p)$ is anonymous and the bid of an agent is unique, the workload of that agent stays the same no matter how that agent is indexed. So we can write $w_{i}(b_{i},b_{-i})$ simply as $w(b_{i},b_{-i})$ for unique $b_{i}$ to represent the workload of agent $i$. Similarly, we can write $p_{i}(b_{i},b_{-i})$ simply as $p(b_{i},b_{-i})$ for unique $b_{i}$.

\subsubsection*{Characterization of truthful mechanisms}
\begin{lemma}[\cite{at}]
The allocation $w(b)$ admits a truthful payment scheme if and only if $w$ is monotone decreasing, i.e., $w_{i}(b_{i}',b_{-i})\leq w_{i}(b_{i},b_{-i})$ for all $i, b_{-i},b_{i}'\geq b_{i}$. In this case, the mechanism is truthful if and only if the payments satisfy
\begin{equation}\label{equic}
p_{i}(b_{i},b_{-i})=h_{i}(b_{-i})+b_{i}w_{i}(b_{i},b_{-i})-\int_{0}^{b_{i}}w_{i}(u,b_{-i})\ du,\qquad\forall i
\end{equation}
where the $h_{i}$s can be arbitrary functions.
\end{lemma}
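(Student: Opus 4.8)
The plan is to decouple the problem across agents: fix an agent $i$ and the other bids $b_{-i}$, abbreviate $w(\cdot):=w_{i}(\cdot,b_{-i})$ and $p(\cdot):=p_{i}(\cdot,b_{-i})$, and note that the truthfulness constraints involving agent $i$ at a fixed $b_{-i}$ only involve these two single-variable functions. Truthfulness then reads: for all $t,b\geq 0$, $p(t)-t\,w(t)\geq p(b)-t\,w(b)$. To get necessity of monotonicity, write this inequality a second time with the roles of $t$ and $b$ exchanged and add the two; the payment terms cancel, leaving $(t-b)(w(t)-w(b))\leq 0$, which is exactly the statement that $w$ is monotone decreasing in agent $i$'s own bid. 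Since every workload is bounded by the total job length, such a $w$ is a bounded monotone function, hence Riemann integrable on every interval $[0,b]$, so the integral in the statement is well defined.

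For the ``if'' direction, I would substitute the proposed payment into agent $i$'s utility from bidding $b$ while holding true value $t$, getting $u(t,b)=h_{i}(b_{-i})+(b-t)\,w(b)-\int_{0}^{b}w(u)\,du$. The inequality $u(t,t)\geq u(t,b)$ then reduces, after cancelling $h_{i}(b_{-i})$, to $\int_{t}^{b}w(u)\,du\geq (b-t)\,w(b)$, and this holds for every monotone decreasing $w$: if $b>t$ then $w(u)\geq w(b)$ throughout $[t,b]$, and if $b<t$ then $w(u)\leq w(b)$ throughout $[b,t]$. Hence every choice of $h_{i}$ gives a truthful payment.

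For the ``only if'' direction, fix $b_{-i}$ and set $g(b):=p(b)-b\,w(b)+\int_{0}^{b}w(u)\,du$; it suffices to show $g$ is independent of $b$, for then $h_{i}(b_{-i}):=g(b)$ puts the payment in the required form. Take $b_{1}<b_{2}$. Applying truthfulness twice --- the agent with value $b_{1}$ does not want to report $b_{2}$, and conversely --- sandwiches $p(b_{1})-p(b_{2})$ between $b_{1}\bigl(w(b_{1})-w(b_{2})\bigr)$ and $b_{2}\bigl(w(b_{1})-w(b_{2})\bigr)$. Using this on each cell of an arbitrary partition $b_{1}=x_{0}<\cdots<x_{N}=b_{2}$ and summing, $p(b_{1})-p(b_{2})$ is squeezed between $\sum_{k}x_{k-1}\bigl(w(x_{k-1})-w(x_{k})\bigr)$ and $\sum_{k}x_{k}\bigl(w(x_{k-1})-w(x_{k})\bigr)$, whose difference is at most $(\text{mesh})\cdot\bigl(w(b_{1})-w(b_{2})\bigr)$; these are the lower and upper Riemann--Stieltjes sums for $-\int_{b_{1}}^{b_{2}}u\,dw(u)$. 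Letting the mesh tend to $0$ and integrating by parts, both converge to $b_{1}w(b_{1})-b_{2}w(b_{2})+\int_{b_{1}}^{b_{2}}w(u)\,du$, so $p(b_{1})-p(b_{2})$ equals this exactly, which rearranges to $g(b_{1})=g(b_{2})$.

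The step I expect to be the main obstacle is this last one: the two truthfulness inequalities only confine $p(b_{1})-p(b_{2})$ to an interval of width $(b_{2}-b_{1})\bigl(w(b_{1})-w(b_{2})\bigr)$, and upgrading that to the exact identity needs the partition-refinement argument plus the observation that the gap between the summed bounds vanishes with the mesh. The other pieces --- extracting monotonicity and checking that the displayed payment is incentive compatible --- are routine manipulations of the defining inequality together with elementary monotonicity of the integral.
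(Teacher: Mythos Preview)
The paper does not supply its own proof of this lemma: it is stated with a citation to Archer and Tardos~\cite{at} and used as a black box. So there is nothing in the paper to compare your proposal against.

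That said, your argument is correct and is essentially the standard Archer--Tardos/Myerson proof. The three pieces --- swapping $t$ and $b$ and adding to extract monotonicity, verifying by direct substitution that the displayed payment is incentive compatible for any monotone $w$, and showing via partition refinement that any truthful payment differs from the displayed one by a function of $b_{-i}$ alone --- are exactly the ingredients in the original reference. Your handling of the ``only if'' direction via the function $g(b)=p(b)-b\,w(b)+\int_{0}^{b}w(u)\,du$ and the Riemann--Stieltjes squeeze is a clean way to organise the limiting step; the key observation that the gap between the upper and lower sums is bounded by the mesh times $w(b_{1})-w(b_{2})$ is precisely what makes the argument go through for an arbitrary (not necessarily continuous) monotone $w$.
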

By Lemma 1, the only flexibility in designing the truthful payments for allocation $w$ is to choose the terms $h_{i}(b_{-i})$. The utility of truth-telling agent $i$ is $h_{i}(b_{-i})-\int_{0}^{b_{i}}w_{i}(u,b_{-i})du$, because his cost is $t_{i}w_{i}(t_{i},b_{-i})$, which cancels out the second term in the payment formula. Thus, in order to make the mechanism individually rational, the term $h_{i}(b_{-i})$ should be at least $ \int_{0}^{b_{i}}w_{i}(u,b_{-i}) \ du$ for any $b_{i}$. Since $b_{i}$ can be arbitrarily large, $h_{i}$ should satisfy
\begin{equation}\label{equir}
h_{i}(b_{-i})\geq \int_{0}^{\infty}w_{i}(u,b_{-i}) \ du, \qquad \forall i,b_{-i}.
\end{equation}

\subsubsection*{Characterization of envy-free mechanisms}

An allocation function $w$ is \emph{envy-free implementable} if there exists a payment function $p$ such that the mechanism $M=(w,p)$ is envy-free.
An allocation function $w$ is \emph{locally efficient} if for all bids $b$, and all permutations $\pi$ of $\{1,\cdots,m\}$,
$$\sum_{i=1}^{m}b_{i}\cdot w_{i}(b)\leq \sum_{i=1}^{m}b_{i}\cdot w_{\pi(i)}(b).$$
\begin{lemma}[\cite{ms}]
Allocation $w$ is envy-free implementable if and only if $w$ is locally efficient.
\end{lemma}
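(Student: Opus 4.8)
The plan is to exploit the fact that the envy-free inequalities in the definition of a mechanism are required to hold \emph{separately} for each fixed bid vector $b$, with no coupling across different $b$. So it suffices to prove the pointwise statement: for a fixed $b$, there is a payment vector $(p_1,\dots,p_m)\in\bigR^m$ with $p_i-b_iw_i(b)\ge p_j-b_iw_j(b)$ for all $i,j$ if and only if the identity assignment minimizes $\sum_i b_iw_{\pi(i)}(b)$ over all permutations $\pi$ — which is exactly local efficiency at $b$. Given this, the ``if'' direction of the lemma follows by picking, for each $b$ independently, one such payment vector to define $p(b)$; and the ``only if'' direction follows because an envy-free mechanism, restricted to any single $b$, is a witness for the pointwise statement at that $b$.

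For the direction ``envy-free implementable $\Rightarrow$ locally efficient'', fix $b$, a payment $p$ witnessing envy-freeness, and an arbitrary permutation $\pi$. I would instantiate the envy-free inequality with the pair $(i,\pi(i))$ for each $i$ and sum over $i$; since $\sum_i p_{\pi(i)}(b)=\sum_i p_i(b)$, the payment terms cancel and we are left with $\sum_i b_iw_i(b)\le\sum_i b_iw_{\pi(i)}(b)$, i.e. local efficiency. This is short and is not where the difficulty lies.

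The substance is the converse. Assume $w$ is locally efficient and fix $b$. Rewrite the target inequalities as a system of difference constraints: we seek $p\in\bigR^m$ with $p_j-p_i\le b_i(w_j(b)-w_i(b))$ for every ordered pair $(i,j)$. Form the complete digraph on $\{1,\dots,m\}$ in which arc $i\to j$ has length $c_{ij}:=b_i(w_j(b)-w_i(b))$. By the standard feasibility criterion for difference constraints (equivalently, existence of shortest-path potentials), a feasible $p$ exists iff this digraph has no negative-length directed cycle, and since any closed walk decomposes into simple cycles it is enough to check simple cycles. For a simple cycle $i_1\to i_2\to\cdots\to i_k\to i_1$ we have $\sum_{\ell=1}^{k}c_{i_\ell i_{\ell+1}}=\sum_\ell b_{i_\ell}w_{i_{\ell+1}}(b)-\sum_\ell b_{i_\ell}w_{i_\ell}(b)$, which equals $\sum_i b_iw_{\sigma(i)}(b)-\sum_i b_iw_i(b)$ for the permutation $\sigma$ that cyclically sends $i_\ell\mapsto i_{\ell+1}$ and fixes every other index (the fixed indices contribute identically to both sums and cancel). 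Local efficiency applied to $\sigma$ says precisely this difference is $\ge 0$, so every simple cycle is nonnegative and the desired $p$ exists. (Equivalently, one may argue via LP duality for the assignment problem, a.k.a.\ the Shapley--Shubik assignment game: local efficiency says the identity assignment of the workloads $w_1(b),\dots,w_m(b)$ to the agents is welfare-maximal for the quasi-linear valuations $v_i(\text{slot }k)=-b_iw_k(b)$, and strong duality produces competitive-equilibrium payments, which are exactly envy-free payments.)

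I expect the correspondence between simple cycles of the constraint digraph and cyclic permutations to be the crux of the argument; the only bookkeeping needed is that it suffices to test simple cycles (closed-walk decomposition) and that ties among the $w_i(b)$ or the $b_i$ are harmless since all inequalities are weak. Everything else — the summation argument for the forward direction and the pointwise-to-global reduction for defining $p$ as a function — is routine.
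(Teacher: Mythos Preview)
Your argument is correct. The necessity direction (summing the envy-free inequalities around a permutation so that the payments cancel) matches the standard proof. For sufficiency, however, the paper (following~\cite{ms}) does not go through difference constraints or negative cycles; instead it exploits the one-dimensional structure of related machines to write down payments in closed form. After sorting so that $b_1\ge b_2\ge\cdots\ge b_m$, local efficiency forces $w_1(b)\le w_2(b)\le\cdots\le w_m(b)$ by the rearrangement inequality, and one sets $p_1(b)=b_1w_1(b)$ and $p_i(b)=p_{i-1}(b)+b_i\bigl(w_i(b)-w_{i-1}(b)\bigr)$ for $i\ge 2$; a short telescoping check then verifies all the envy-free constraints directly. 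Your shortest-path-potential approach is more general---it would apply equally well to unrelated machines or any quasi-linear setting---and it explains structurally \emph{why} payments exist (nonnegativity of permutation cycles is exactly local efficiency). The paper's construction, on the other hand, gives explicit payments without invoking any black box (Bellman--Ford feasibility or LP duality), and those explicit payments are what the paper goes on to contrast with the truthful payment formula~(\ref{equic}).
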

The proof of sufficiency constructs a payment scheme that ensures the envy-freeness for any locally efficient allocation $w$. Specifically, assuming $b_{1}\geq b_{2}\geq\ldots\geq b_{m}$, the payments for related machines are the following:
$$
p_{i}(b)=\begin{cases}
b_{1}\cdot w_{1}(b) & \text{for } i=1\\
p_{i-1}(b)+b_{i}\cdot (w_{i}(b)-w_{i-1}(b)) & \text{for } i\in\{2,\ldots,m\} \\
\end{cases}
$$ These payments are not truthful payments, since $p_{1}(b)$ is clearly not in the form of~(\ref{equic}). But the set of envy-free payments is a convex polytope for fixed $w$, since payments satisfying linear constraints $\forall i,j~~p_{i}(b)-b_{i}w_{i}(b)\geq p_{j}(b)-b_{i}w_{j}(b)$ are envy-free. So there could be other payments that are both envy-free and truthful.

\section{Lower Bound on Anonymous Mechanisms}
In this section, we will prove an approximation lower bound for truthful, envy-free, individually rational, and anonymous mechanisms.
\begin{theorem}\label{thm1} Let $M=(w,p)$ be a deterministic, truthful, envy-free,
individually rational, and anonymous mechanism. Then $M$ is not $c$-approximate for $c<
2-\frac{1}{m}$. \end{theorem}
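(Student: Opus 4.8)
The plan is to argue by contradiction: assume $M=(w,p)$ is $c$-approximate with $c<2-\tfrac1m$, and show that truthfulness, envy-freeness, individual rationality and anonymity together force $w$ to over-load the fastest machine on some speed profile, pushing the ratio up to $2-\tfrac1m$. I would run the argument on the instance of $m$ unit-length jobs (augmenting it, if needed, with one job of length $m$ — the instance on which the list-scheduling ratio $2-\tfrac1m$ is exactly tight). The first step is purely approximation-theoretic: at the all-ones bid vector $\vec 1$ the optimum makespan is $1$, and since $c<2$ we must have $w_i(\vec 1)=1$ for every $i$.

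The second step analyses the non-increasing step function $v\mapsto w_1(v,\vec 1)$ (monotone by the Archer--Tardos characterization, Lemma~1). Reading off the $c$-approximation at the true profile $(v,\vec 1)$ pins down its shape: for $v\le\tfrac1{2m-1}$ it must equal $m$ (a machine this much faster than the others must absorb every job, else $\opt(v,\vec 1)=mv$ while a speed-$1$ machine stays busy and the ratio is $\ge\tfrac1{mv}>c$); for $v$ large it is $0$; and, crucially, it must still equal $1$ for every $v\le\tfrac{2m}{2m-1}$, since $w_1(v,\vec 1)=0$ with $1<v\le\tfrac{2m}{2m-1}$ would force all $m$ jobs onto the remaining $m-1$ machines (makespan $\ge 2$) while $\opt(v,\vec 1)=v$, a ratio $\ge\tfrac2v\ge 2-\tfrac1m>c$. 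Hence $\sigma:=\inf\{v:w_1(v,\vec 1)=0\}>\tfrac{2m}{2m-1}$; a symmetric argument locates the threshold below which the workload exceeds $1$, and local efficiency (Lemma~2) fixes the other machines' loads on these profiles. Together this evaluates $\int_0^\infty w_1(v,\vec 1)\,dv$ as a sum of thresholds, each $\ge\tfrac1{2m-1}$.

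The third step brings in the payments. Envy-freeness makes machines submitting equal bids get equal utilities, hence — via the Archer--Tardos payment formula — equal payments, so on every profile $(v,\vec 1)$ we control the common payment is an explicit function of the offset $h_1(b_{-1})$ and the integral above, while individual rationality gives $h_1(b_{-1})\ge\int_0^\infty w_1(v,\vec 1)\,dv$. Anonymity lets me treat the offsets $h_i(\cdot)$ as one function of the multiset of the other bids, so the offset appearing when a machine is heavily loaded can be compared with the one appearing when it is idle; pairwise envy-freeness between a loaded and an idle machine on a profile near $\sigma$ then becomes an inequality tying these offsets to the integral. I would chain such inequalities along a path of profiles obtained by moving one bid at a time — each step using monotonicity to bound the workload change, envy-freeness to relate payments, and anonymity together with IR to relate and bound the offsets — until the resulting system is incompatible with $\sigma>\tfrac{2m}{2m-1}$, i.e.\ until $w$ is forced to hand the fastest machine so much load that the instance is not $c$-approximated.

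The hard part is exactly this chaining. Envy-freeness only constrains payment \emph{differences} at a fixed bid vector, and the Archer--Tardos offsets $h_i(b_{-i})$ are free with only IR bounding them from below, so no contradiction can come from the magnitude of any single payment — it must come from \emph{rigidity of the allocation} forced by imposing the envy-free payment constraints and the Archer--Tardos payment form simultaneously over the whole family of profiles around $\vec 1$ and around the thresholds (and, I expect, from using the $m$-unit-jobs-plus-one-length-$m$-job instance precisely so that the residual freedom collapses onto the tight list-scheduling configuration). Getting the bookkeeping of offsets and thresholds to close into a contradiction is where the genuine work lies; everything preceding it is routine monotonicity-and-approximation arithmetic.
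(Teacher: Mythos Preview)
Your proposal assembles the right tools --- the Archer--Tardos payment identity, the observation that anonymity collapses the offsets $h_i$ to a single function $h$, the IR lower bound $h(b_{-i})\ge\int_0^\infty w_i(u,b_{-i})\,du$, and the envy-free upper bound on $h(t_{-i})-h(t_{-j})$ --- and you correctly locate the difficulty in ``chaining'' these into a contradiction. But the specific chaining you sketch, perturbing around the all-ones profile and reading off thresholds of $v\mapsto w_1(v,\vec 1)$, does not close. At any fixed scale the envy-free constraint only bounds \emph{differences} $h(t_{-i})-h(t_{-j})$ by an expression of size $O(L)$, while IR only bounds each $h(\cdot)$ \emph{below} by a finite integral; nothing in your plan forces these two bounds to collide, because the designer may simply set every relevant $h(\cdot)$ to the same sufficiently large constant plus small corrections compatible with the difference bounds. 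Your threshold computations (e.g.\ $\sigma>\tfrac{2m}{2m-1}$) are correct but are statements about any locally efficient $c$-approximate allocation; they do not by themselves pinch the offsets.

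The paper supplies exactly the missing lever: a free scale parameter $\alpha$ in the speed profile that can be sent to infinity. It works on $t=(m\alpha,\ldots,m\alpha,\alpha)$ with jobs $(1,\ldots,1,m)$, $L=2m-1$. If some slow machine receives a job, the IR integral gives $h(t_{-1})\ge\bigl(L+\tfrac{m-1}{Lc}\bigr)\alpha$, linear in $\alpha$ with coefficient \emph{strictly exceeding} $L$. For the matching upper bound the envy inequality is telescoped along the geometric chain $t^{(i)}=(\gamma^{i-1},\ldots,\gamma,1,m\alpha,\ldots,m\alpha)$ with $\gamma>cL$: at each $t^{(i)}$ the $c$-approximation forces $w_1(t^{(i)})=0$, which kills the workload term in the envy bound and leaves $h(t^{(i)}_{-1})-h(t^{(i+1)}_{-1})\le \gamma^{i-1}L$. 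Summing yields $h(t_{-1})\le L\alpha+f(m,c)$ with $f(m,c)$ \emph{independent of~$\alpha$}. Taking $\alpha$ large makes the lower bound overtake the upper bound, so all jobs must go to the fast machine, giving makespan $(2m-1)\alpha$ against optimum $m\alpha$. Your outline lacks both the scaling parameter and the geometric chain that converts the envy bound into an $\alpha$-independent cap; without them the bookkeeping you allude to has no contradiction to arrive at.
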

Since the only flexibility when designing payments in a
truthful mechanism is to choose the $h_{i}$s, we need to know what kind of
$h_{i}$s are required for envy-free anonymous mechanisms. The following two lemmas give
necessary conditions on $h_{i}$s. \begin{lemma}\label{lemma_1} If a mechanism $(w,p)$ is
both truthful and anonymous, then there is a function $h$ such that $h_{i}(v)=h(v)$ in
(\ref{equic}) for all bid vector $v\in\bigR^{m-1}_{+}$ and machines $i$. \begin{proof}
Let $\beta$ be a real number such that $\beta<\min_{j}v_{j}$. For all
$i\in\{1,\ldots,m-1\}$, define vector $b^{(i)}=(v_{1},\ldots,
v_{i-1},\beta,v_{i},\ldots,v_{m-1})$, $b'^{(i)}=(v_{1},\ldots,
v_{i},\beta,v_{i+1},\ldots,v_{m-1})$. Since $v=b^{(i)}_{-i}=b'^{(i)}_{-(i+1)}$ and $M$ is anonymous,
it must be that $p_{i}(b^{(i)})=p_{i+1}(b'^{(i)})$ and $w_{i}(b^{(i)})=w_{i+1}(b'^{(i)})$. Since $\alpha<v_{j}$
for any $0<\alpha<\beta, j\in\{1\ldots m-1\}$, we also have
$w_{i}(\alpha,v)=w_{i+1}(\alpha,v)$ by anonymity. Thus, for truthful payments, we have
$$h_{i}(v)~=~p_{i}(b^{(i)})-\beta
w_{i}(b^{(i)})+\int_{0}^{\beta}w_{i}(\alpha,v)\da~=~p_{i+1}(b'^{(i)})-\beta
w_{i+1}(b'^{(i)})+\int_{0}^{\beta}w_{i+1}(\alpha,v)\da~=~h_{i+1}(v).$$ \end{proof} \end{lemma}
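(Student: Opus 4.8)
The plan is to invert the truthful-payment characterization~(\ref{equic}) so that each $h_{i}$ is expressed purely through the mechanism's own outputs $p_{i}$ and $w_{i}$, and then to use anonymity to identify $h_{i}$ with $h_{i+1}$ for consecutive indices. Since $(w,p)$ is truthful, the characterization of~\cite{at} tells us that $w$ is monotone and the payments have the form~(\ref{equic}); solving that identity for $h_{i}$ gives, for \emph{every} choice of $b_{i}$,
$$h_{i}(v)~=~p_{i}(b_{i},v)-b_{i}\,w_{i}(b_{i},v)+\int_{0}^{b_{i}}w_{i}(u,v)\ du,$$
where $v$ plays the role of $b_{-i}$. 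The key observation is that the right-hand side cannot actually depend on the particular $b_{i}$ we substitute, so we are free to evaluate $h_{i}(v)$ at whatever bid is most convenient.

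Next I would fix $v=(v_{1},\dots,v_{m-1})\in\bigR^{m-1}_{+}$, choose $\beta>0$ strictly below every coordinate of $v$, and for each $i\in\{1,\dots,m-1\}$ form the profiles $b^{(i)}$ and $b'^{(i)}$ obtained by inserting $\beta$ into position $i$, respectively position $i+1$, among the coordinates of $v$. These two profiles carry the same multiset of bids and differ only by transposing coordinates $i$ and $i+1$, and because $\beta$ lies strictly below every coordinate of $v$ the $\beta$-bidding agent holds a unique bid in each; anonymity therefore gives $p_{i}(b^{(i)})=p_{i+1}(b'^{(i)})$ and $w_{i}(b^{(i)})=w_{i+1}(b'^{(i)})$. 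Running the same transposition with $\beta$ replaced by an arbitrary $\alpha\in(0,\beta)$ yields $w_{i}(\alpha,v)=w_{i+1}(\alpha,v)$ for all such $\alpha$, hence $\int_{0}^{\beta}w_{i}(\alpha,v)\da=\int_{0}^{\beta}w_{i+1}(\alpha,v)\da$. Substituting $b_{i}=\beta$ (and, symmetrically, $b_{i+1}=\beta$) into the formula from the first step and plugging in these three equalities makes the right-hand sides coincide term by term, so $h_{i}(v)=h_{i+1}(v)$; as $i\le m-1$ and $v$ were arbitrary, all the $h_{i}$ agree, and we take $h$ to be this common function.

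The only step that requires real care is the interface with the anonymity definition: one has to verify that the profile obtained by transposing coordinates $i$ and $i+1$ of $b^{(i)}$ is literally the vector $b'^{(i)}$, and that the moved bid---$\beta$, and later each $\alpha<\beta$---is genuinely the unique holder of its value, which is exactly why $\beta$ is taken below $\min_{j}v_{j}$. Once that bookkeeping is settled, the remainder is pure substitution into~(\ref{equic}), with integrability of $w_{i}(\cdot,v)$ on $[0,\beta]$ coming for free from monotonicity and the boundedness of any workload.
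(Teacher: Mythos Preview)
Your proposal is correct and follows essentially the same argument as the paper: invert~(\ref{equic}) to express $h_{i}(v)$ via $p_{i}$, $w_{i}$ evaluated at a freely chosen bid, pick that bid to be some $\beta<\min_{j}v_{j}$, insert $\beta$ at positions $i$ and $i+1$ to build two profiles differing by a transposition, and use anonymity (uniqueness of $\beta$ and of each $\alpha<\beta$) to match the payment, workload, and integral terms so that $h_{i}(v)=h_{i+1}(v)$. The paper's proof is precisely this, only written more tersely.
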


\begin{lemma}\label{lemma_envy}
Let $L=\sum_{k}l_{k}$. If mechanism $M=(w,p)$ is truthful, envy-free, and anonymous, then
\begin{equation}\label{equenvy}
h(t_{-i})-h(t_{-j})\leq L\cdot t_{i}+(t_{j}-t_{i})w_i(t),
\end{equation}
for all $t\in \bigR_{+}^{m}$ and $i,j\in\{1,\ldots,m \}$.
\end{lemma}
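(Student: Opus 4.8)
The plan is to feed the explicit truthful payment formula into the envy-freeness inequality. By Lemma~\ref{lemma_1}, the anonymity of $M$ lets us use a single function $h$ in place of every $h_i$ in~(\ref{equic}), so that (using monotonicity of $w$, which follows from truthfulness) $p_i(b)=h(b_{-i})+b_iw_i(b)-\int_0^{b_i}w_i(u,b_{-i})\ du$ for each $i$. The first observation is that the $b_iw_i(b)$ term in the payment exactly cancels the cost term appearing in the envy-free expression, i.e. $p_i(b)-b_iw_i(b)=h(b_{-i})-\int_0^{b_i}w_i(u,b_{-i})\ du$, and similarly $p_j(b)=h(b_{-j})+b_jw_j(b)-\int_0^{b_j}w_j(u,b_{-j})\ du$.

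Next I would invoke envy-freeness in the direction ``agent $j$ does not envy agent $i$'', i.e. $p_j(b)-b_jw_j(b)\geq p_i(b)-b_jw_i(b)$, and substitute the two expressions above on each side. After rearranging so that the difference of $h$-values is isolated on the left, this becomes
$$h(b_{-i})-h(b_{-j})\ \leq\ \int_0^{b_i}w_i(u,b_{-i})\ du-\int_0^{b_j}w_j(u,b_{-j})\ du+(b_j-b_i)\,w_i(b).$$
Now I would specialize $b=t$; the summand $(t_j-t_i)w_i(t)$ already matches the right-hand side of~(\ref{equenvy}), so it only remains to control the two integrals.

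For that last step, note that in any valid job assignment the workload of a machine lies between $0$ and $L=\sum_{k}l_{k}$, so $0\leq w_i(u,t_{-i})\leq L$ for every $u\geq 0$. Hence $\int_0^{t_i}w_i(u,t_{-i})\ du\leq L\,t_i$ and $\int_0^{t_j}w_j(u,t_{-j})\ du\geq 0$, and substituting these two bounds into the displayed inequality gives exactly $h(t_{-i})-h(t_{-j})\leq L\,t_i+(t_j-t_i)w_i(t)$, which is~(\ref{equenvy}).

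The computations here are routine; the only points needing care are choosing the correct orientation of the envy-free inequality (so that $h(t_{-i})-h(t_{-j})$ ends up on the smaller side, which forces using ``$j$ does not envy $i$'' rather than ``$i$ does not envy $j$''), and exploiting the cancellation of the $b_iw_i(b)$ term rather than fighting it. A minor technical point is that Lemma~\ref{lemma_1} and the anonymity-based substitutions are cleanest when the coordinates of $t$ are pairwise distinct; since the resulting inequality is continuous in $t$, the general case follows by a limiting argument.
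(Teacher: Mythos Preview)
Your proof is correct and follows essentially the same route as the paper: apply the envy-freeness inequality in the direction ``$j$ does not envy $i$'', substitute the truthful payment formula~(\ref{equic}) with the single $h$ from Lemma~\ref{lemma_1}, rearrange to isolate $h(t_{-i})-h(t_{-j})$, and bound the two integrals by $L\,t_i$ and $0$ respectively. Your closing remark about distinct coordinates and continuity is a small extra justification not spelled out in the paper, but otherwise the arguments coincide.
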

\begin{proof}
If machine $j$ does not envy machine $i$, then $p_{j}(t)-t_{j}w_{j}(t) \geq p_{i}(t)-t_{j}w_{i}(t)$. Using~(\ref{equic}) to substitute in for $p_{i}$ and $p_{j}$, and Lemma~\ref{lemma_1}, this yields $$\left( h( t_{-j})+t_{j}w_j(t)-\int_{0}^{t_{j}}w(x,t_{-j})\ dx\right) - t_{j}w_j(t) \geq \left(h(t_{-i})+t_{i}w_i(t)-\int_{0}^{t_{i}}w(x,t_{-i})\ dx\right)-t_{j}w_i(t).$$
Rearranging terms gives
\begin{eqnarray*}
h(t_{-i})-h(t_{-j})&\leq& \int_{0}^{t_{i}}w(x,t_{-i}) \dx - \int_{0}^{t_{j}}w(x,t_{-j})dx+(t_{j}-t_{i})w_i(t)\\
&\leq&\int_{0}^{t_{i}}L \dx - \int_{0}^{t_{j}}0\dx+(t_{j}-t_{i})w_i(t)\\
&=&L\cdot t_{i}+(t_{j}-t_{i})w_{i}(t).\\
\end{eqnarray*}
\end{proof}

\noindent
\textit{Proof of Theorem~\ref{thm1}.}\
Consider $n=m$ jobs of length $l=(1,\ldots,1,m)$.
Let $L:=2m-1$ denote the total length of the jobs. Define speed vector $t=(m\alpha,\ldots,m\alpha,\alpha)$, where $\alpha$ is a real number that only depends on $m$ and $c$ and will be determined at the end of this section. We will show that if $M$ is deterministic, truthful, envy-free, individually rational and anonymous, it should allocate all jobs to the quickest machine in this instance.

%% Claim 6
\begin{claim}\label{lemma_lb}
For speed vector $t=(m\alpha,\ldots,m\alpha,\alpha)$ and jobs $l=(1,\ldots,1,m)$, if $M$ is $c$-approximate and $w_{i}(t)\geq1$ for some $i\in\{1,\ldots,m-1\}$, then $$h(t_{-1})\geq (L+\frac{m-1}{Lc})\cdot \alpha.$$
\begin{proof}
Since $M$ is truthful and individually rational, inequality (\ref{equir}) applies, and
$$
h(t_{-1})~\geq~\int_{0}^{\infty}w(x,t_{-1})\dx
~\geq~\int_{0}^{\frac{\alpha}{Lc}}w(x,t_{-1})\dx+\int_{\frac{\alpha}{Lc}}^{\alpha}w(x,t_{-1})\dx+\int_{\alpha}^{m\alpha}w(x,t_{-1})\dx.
$$
Apply $M$ to vector $(x,t_{-1})$. By the local efficiency of $w$, job $m$ should be assigned to the quickest machine. So for $x<\alpha$, $w(x,t_{-1})\geq l_{m}$. When $x<\frac{\alpha}{Lc}$, all the jobs should be assigned to the machine with speed $x$ for a makespan less than $\alpha/c$. Otherwise the makespan is at least $\alpha$, contradicting $M$ is $c$-approximate. Since $w_{i}(m\alpha,t_{-i})\geq1$ for some $i\in\{1,\ldots,m-1\}$, monotonicity implies $w_{i}(x,t_{-1})\geq 1$ for all $x\in(\alpha,m\alpha)$. Since $x\in(\alpha,m\alpha)$ is unique in vector $(x,t_{-1})$, we get $w(x,t_{-1})\geq 1$ by anonymity. Thus
$$
h(t_{-1})~\geq~\int_{0}^{\frac{\alpha}{Lc}}L\dx+\int_{\frac{\alpha}{Lc}}^{\alpha}m\dx+\int_{\alpha}^{m\alpha}1\dx
~=~\frac{1}{c}\alpha+m\alpha-\frac{m}{Lc}\alpha+m\alpha-\alpha
~=~(L+\frac{m-1}{Lc})\alpha.
$$
\end{proof}
\end{claim}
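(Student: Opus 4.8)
The plan is to bound $h(t_{-1})$ from below through the individual-rationality inequality~(\ref{equir}), which gives $h(t_{-1})\geq\int_{0}^{\infty}w(x,t_{-1})\dx$, and then to bound the integrand $w(x,t_{-1})$ below on three disjoint intervals of $x$ whose breakpoints are chosen so that the resulting integral telescopes to $(L+\frac{m-1}{Lc})\alpha$. The three intervals are $[0,\frac{\alpha}{Lc})$, $[\frac{\alpha}{Lc},\alpha)$, and $[\alpha,m\alpha)$; on the remaining tail $[m\alpha,\infty)$ I would simply discard the nonnegative contribution. Throughout, $t_{-1}$ consists of $m-2$ copies of $m\alpha$ and a single $\alpha$, so for any $x\in(0,m\alpha)$ with $x\neq\alpha$ the bid $x$ is unique in $(x,t_{-1})$, which licenses the anonymous notation $w(x,t_{-1})$.

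For the first interval I would argue that a machine bidding $x<\frac{\alpha}{Lc}$ is so fast that a $c$-approximate mechanism must pile all jobs onto it: placing every job there achieves makespan $xL$, so $\opt(t)\leq xL$ and the mechanism's makespan is at most $cxL<\alpha$; but if even one job (each of length $\geq1$) sat on any other machine, all of which have speed $\geq\alpha$, the makespan would be at least $\alpha$, a contradiction. Hence $w(x,t_{-1})=L$ there. For the second interval, $x<\alpha$ makes the machine bidding $x$ the unique fastest in $(x,t_{-1})$, and local efficiency forces the length-$m$ job onto it: otherwise that job sits on a slower machine carrying workload $\geq m$, while the fastest machine could only collect unit jobs (at most $m-1$ of them, total $<m$), contradicting that the fastest machine must carry the largest workload. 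This yields $w(x,t_{-1})\geq l_{m}=m$.

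The third interval is where the hypothesis $w_{i}(t)\geq1$ enters, and I expect it to be the main obstacle, because the bid $m\alpha$ is \emph{not} unique in $t$, so the anonymity shortcut does not apply directly to machine $i$. The plan is to keep the index $i$ explicit: since $t_{i}=m\alpha$, the hypothesis reads $w_{i}(m\alpha,t_{-i})\geq1$, and because $x\leq m\alpha$, monotonicity (a machine bidding a faster speed never receives less workload) gives $w_{i}(x,t_{-i})\geq w_{i}(m\alpha,t_{-i})\geq1$ for every $x\in(\alpha,m\alpha)$. Since $t_{-i}$ has the same multiset of bids as $t_{-1}$ and $x$ is now unique, anonymity identifies $w(x,t_{-1})=w_{i}(x,t_{-i})\geq1$, so the integrand is at least $1$ on this interval. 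Finally I would substitute the three bounds and evaluate $\int_{0}^{\alpha/(Lc)}L\dx+\int_{\alpha/(Lc)}^{\alpha}m\dx+\int_{\alpha}^{m\alpha}1\dx$: using $L=2m-1$, the linear-in-$\alpha$ terms collapse to $L\alpha$, while the two $c$-dependent terms combine to $\frac{\alpha}{c}\bigl(1-\frac{m}{L}\bigr)=\frac{(m-1)\alpha}{Lc}$, giving exactly $(L+\frac{m-1}{Lc})\alpha$.
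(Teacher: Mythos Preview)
Your proposal is correct and follows essentially the same route as the paper's proof: the same IR bound~(\ref{equir}), the same three-interval decomposition of $[0,m\alpha)$, and the same tools (approximation ratio, local efficiency, monotonicity-plus-anonymity) on each interval. Your handling of the third interval is in fact a bit more careful than the paper's, since you make explicit that monotonicity is applied at index $i$ to get $w_{i}(x,t_{-i})\geq1$ before invoking anonymity (via the equality of the multisets $t_{-i}$ and $t_{-1}$) to identify this with $w(x,t_{-1})$; the only slip is the notation $\opt(t)$ where you mean $\opt(x,t_{-1})$.
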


Let $t'=(1,m\alpha,\ldots,m\alpha,\alpha)$. Applying $M$ to $t'$, Lemma~\ref{lemma_envy} implies
$$
h(t'_{-1})-h(t'_{-m})~\leq~L+(\alpha-1)w_{1}(t')~\leq~ L\cdot\alpha.
$$
Since $t'_{-1}=t_{-1}$, this implies
%% label{equup1}
\begin{equation}\label{equup1}
h(t_{-1})\leq L\cdot\alpha + h(t'_{-m}).
\end{equation}
%%lemma_ub2
\begin{claim}\label{lemma_ub2}
If $M$ is $c$-approximate, then $h(t_{-1})<L\cdot\alpha+ f(m,c)$, where $f(m,c)=\gamma^{m-1}L+h(\gamma^{m-2},\gamma^{m-1},\ldots, \gamma,1)$ and $\gamma=cL+\epsilon$ for some $0<\epsilon<1$.
\begin{proof}
Define speed vector $t^{(i)}=(\gamma^{i-1},\gamma^{i-2},\ldots,\gamma,1,m\alpha,\ldots,m\alpha)$ of length $m$ for $i\geq 2$, where $\gamma=cL+\epsilon$ for some $0<\epsilon<1$.

Let us consider the allocation $M$ makes to machine 1 for bid vector $t^{(i)}$. The speed of machine 1 is $\gamma^{i-1}\geq \gamma$ for $i\geq 2$. The speed of machine $i$ is 1. The makespan of allocating all jobs to machine $i$ is $L$ while the makespan of allocating at least one job to machine 1 is at least $\gamma=cL+\epsilon$. Since $M$ is $c$-approximate, this means $w_{1}(t^{(i)})=0$. Using Lemma~\ref{lemma_envy}, we have
$$
h(t^{(i)}_{-1})-h(t^{(i)}_{-m})~\leq~{t^{(i)}_{1}}L+(t^{(i)}_{m}-t^{(i)}_{1})w_{1}(t^{(i)})~=~\gamma^{i-1} L.
$$
Since $t^{(i)}_{-m}=t^{(i+1)}_{-1}$, this implies
$
h(t^{(i)}_{-1})-h(t^{(i+1)}_{-1})\leq \gamma^{i-1}L\text{ for } i\in\{2,\ldots, m-1\}.
$
Summing up these inequalities on all $i$, we have
\begin{equation*}
h(t^{(2)}_{-1})-h(t^{(m)}_{-1})\leq L\sum_{i=2}^{m-1}\gamma^{i-1}< \gamma^{m-1}L
\end{equation*}
Since $t'_{-m}=t^{(2)}_{-1}$, we get $h(t'_{-m})<\gamma^{m-1}L+h(t^{(m)}_{-1})=f(m,c)$. Plugging this into (\ref{equup1}) yields
\begin{equation}\label{equup2}
h(t_{-1})<L\cdot\alpha+ f(m,c).
\end{equation}
\end{proof}
\end{claim}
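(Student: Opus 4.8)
The plan is to bound $h(t_{-1})$ from above by chaining the envy inequality of Lemma~\ref{lemma_envy} along a carefully chosen sequence of speed vectors. Recall from~(\ref{equup1}) that $h(t_{-1})\leq L\cdot\alpha+h(t'_{-m})$, so it suffices to establish $h(t'_{-m})<f(m,c)$. To this end I would introduce, for $i\geq 2$, the vectors $t^{(i)}=(\gamma^{i-1},\gamma^{i-2},\ldots,\gamma,1,m\alpha,\ldots,m\alpha)$ of length $m$, which place $i$ geometrically increasing ``fast'' speeds in front of $m-i$ copies of the slow speed $m\alpha$, where $\gamma=cL+\epsilon$ for some $0<\epsilon<1$. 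These are engineered so that deleting the last coordinate of $t^{(i)}$ yields exactly the same input as deleting the first coordinate of $t^{(i+1)}$, i.e.\ $t^{(i)}_{-m}=t^{(i+1)}_{-1}$, and so that $t'_{-m}=t^{(2)}_{-1}$ while $t^{(m)}_{-1}=(\gamma^{m-2},\ldots,\gamma,1)$; this alignment is precisely what makes the $h$-terms telescope and connects the endpoints to $f(m,c)$.

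First I would argue that $c$-approximation forces $w_1(t^{(i)})=0$ for every $i\geq 2$. On input $t^{(i)}$ machine~$1$ has the largest speed $\gamma^{i-1}\geq\gamma$ while machine~$i$ has speed $1$; assigning every job to machine~$i$ gives makespan $L$, so $\opt(t^{(i)})\leq L$, whereas putting even one job on machine~$1$ costs at least $\gamma^{i-1}\geq\gamma=cL+\epsilon>cL\geq c\cdot\opt(t^{(i)})$, contradicting the approximation guarantee. Hence machine~$1$ receives no work.

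Next I would feed each $t^{(i)}$ into Lemma~\ref{lemma_envy} with the index pair $(i,j)=(1,m)$. Since $w_1(t^{(i)})=0$, the right-hand side collapses to $t^{(i)}_1\cdot L=\gamma^{i-1}L$, giving $h(t^{(i)}_{-1})-h(t^{(i)}_{-m})\leq\gamma^{i-1}L$. Using the matching identity $t^{(i)}_{-m}=t^{(i+1)}_{-1}$ this becomes $h(t^{(i)}_{-1})-h(t^{(i+1)}_{-1})\leq\gamma^{i-1}L$ for $i\in\{2,\ldots,m-1\}$; summing telescopes the left side to $h(t^{(2)}_{-1})-h(t^{(m)}_{-1})\leq L\sum_{i=2}^{m-1}\gamma^{i-1}$. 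Because $\gamma>2$ (as $\gamma=cL+\epsilon$ with $L=2m-1$ and $c\geq1$), the geometric sum is strictly below $\gamma^{m-1}$, so the right side is $<\gamma^{m-1}L$. Substituting $t'_{-m}=t^{(2)}_{-1}$ and the definition $f(m,c)=\gamma^{m-1}L+h(t^{(m)}_{-1})$ then yields $h(t'_{-m})<f(m,c)$, and combining with~(\ref{equup1}) gives the claimed bound $h(t_{-1})<L\cdot\alpha+f(m,c)$.

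The main obstacle is the bookkeeping of the shifted vectors: one must verify that the truncations $t^{(i)}_{-m}$ and $t^{(i+1)}_{-1}$ are genuinely equal as inputs, so that the anonymous function $h$ takes the same value on both and the telescoping is valid, and that the two endpoints line up respectively with $t'_{-m}$ and with the $h(\cdot)$ term inside $f(m,c)$. The approximation argument forcing $w_1=0$, and the bound of the geometric sum by $\gamma^{m-1}$, are routine once $\gamma$ is chosen strictly larger than $cL$; the delicate part is purely the indexing and telescoping structure.
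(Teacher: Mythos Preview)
Your proposal is correct and follows essentially the same argument as the paper: the same sequence of speed vectors $t^{(i)}$, the same use of $c$-approximation to force $w_1(t^{(i)})=0$, the same application of Lemma~\ref{lemma_envy} with indices $(1,m)$, and the same telescoping via $t^{(i)}_{-m}=t^{(i+1)}_{-1}$. The only difference is that you supply an explicit justification for the geometric-sum bound $\sum_{i=2}^{m-1}\gamma^{i-1}<\gamma^{m-1}$ (using $\gamma>2$), which the paper leaves implicit.
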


To complete the proof of Theorem~\ref{thm1}, consider speed vector $t$ with $\alpha=\frac{Lc}{m-1}f(m,c)$. If mechanism $M$ does not allocate all jobs to machine $m$, then $w_{i}(t)\geq 1$ for some $i\in\{1,\ldots,m-1\}$. Then Claim~\ref{lemma_lb} implies that $h(t_{-1})\geq \alpha\cdot L+f(m,c)$, contradicting (\ref{equup2}). So $M$ must allocate all jobs to machine $m$ in this case, yielding  a makespan of $(2m-1)\alpha$ while the makespan of the  schedule that assigns job $j$ to machine $j$ for all $j$ is $m\alpha$. Thus, $M$ is $c$-approximate for some $c\geq 2-1/m$. $\hfill{} \qed$

\section{Characterizing Scalable Mechanisms on Two Machines}
We show that known monotone and locally efficient allocations do not have payments to form truthful, envy-free, individually rational, and anonymous mechanisms. (See Section 5 and Appendix B.) 

In this section, we will show that for two machines, there is just one deterministic scalable allocation that can be made truthful, envy-free, individually rational, and anonymous. This allocation turns out to be the same allocation as the VCG mechanism.

\begin{lemma}\label{lemma_int}
Let $w$ be a deterministic and scalable allocation function for 2 machines. For some $k>1$, if $w(x,a)>0$ for all $a>0$ and $x<ka$, then there is some $g(k)>0$ such that
$$\int_{a}^{ka}w(x,a)\dx\geq\int_{\frac{a}{k}}^{a}w(a,x)\dx+g(k)\cdot a.$$
\begin{proof}
For $a<x<ka$, let $x=\frac{a^{2}}{t}$.
\begin{align*}
\int_{a}^{ka}w(x,a)\dx &= \int_{a}^{\frac{a}{k}}w(\frac{a^{2}}{t},a)(-\frac{a^{2}}{t^{2}})\dt\tag{integrate by substitution}\\
&=\int_{\frac{a}{k}}^{a}\frac{a^{2}}{t^{2}}w(a,t)\dt\tag{$w$ is scalable}\\
&=\int_{\frac{a}{k}}^{\frac{k+1}{2k}a}\frac{a^{2}}{t^{2}}w(a,t)\dt+\int_{\frac{k+1}{2k}a}^{a}\frac{a^{2}}{t^{2}}w(a,t)\dt\\
\end{align*}
For $\frac{a}{k}<t<\frac{k+1}{2k}a$ and $k>1$, we have $\frac{a^{2}}{t^{2}}\geq a^{2}/(\frac{k+1}{2k}a)^{2}=\frac{4k^{2}}{(k+1)^{2}}>1$. For $\frac{k+1}{2k}a<t<a$, we have $\frac{a^{2}}{t^{2}}\geq 1$. Therefore,
\begin{eqnarray*}
\int_{a}^{ka}w(x,a)\dx&\geq&\frac{4k^{2}}{(k+1)^{2}}\int_{\frac{a}{k}}^{\frac{k+1}{2k}a}w(a,t)\dt+\int_{\frac{k+1}{2k}a}^{a}w(a,t)\dt \\
&=&\left(\frac{4k^{2}}{(k+1)^{2}}-1\right)\int_{\frac{a}{k}}^{\frac{k+1}{2k}a}w(a,t)\dt+\int_{\frac{a}{k}}^{a}w(a,t)\dt\\
\end{eqnarray*}
We also have
$$\int_{\frac{a}{k}}^{\frac{k+1}{2k}a}w(a,t)\dt~=~\int_{\frac{a}{k}}^{\frac{k+1}{2k}a}w(1,\frac{t}{a})\dt~=~a\int_{\frac{1}{k}}^{\frac{k+1}{2k}}w(1,y)\ dy.$$
The first equality follows the scalability of $w$ and we get the second equality by substituting $t$ with $ay$.
Since $w(x,a)>0$ for all $a>0$ and $x<ka$, we have $w(1,y)>0$ for $y>1/k$.
In sum, $\int_{a}^{ka}w(x,a)\dx\geq\int_{\frac{a}{k}}^{a}w(a,x)\dx+g(k)\cdot a$, where
$g(k)=\left(\frac{4k^{2}}{(k+1)^{2}}-1\right)\int_{\frac{1}{k}}^{\frac{k+1}{2k}}w(1,y)\ dy>0$.
\end{proof}
\end{lemma}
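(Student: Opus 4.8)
\textit{Proof proposal for Lemma~\ref{lemma_int}.}
The plan is to transport the left‑hand integral onto the same range as the right‑hand one by the change of variables $x=a^{2}/t$. As $x$ runs over $(a,ka)$ the new variable $t$ runs over $(a/k,a)$, and $\dx$ becomes $\tfrac{a^{2}}{t^{2}}\dt$; combining this with scalability applied to the factor $t/a$, namely $w(a^{2}/t,a)=w(a,t)$, rewrites $\int_{a}^{ka}w(x,a)\dx$ as $\int_{a/k}^{a}\tfrac{a^{2}}{t^{2}}\,w(a,t)\dt$. Since the right‑hand side of the lemma is exactly $\int_{a/k}^{a}w(a,t)\dt$ after renaming $x$ to $t$, the claim reduces to producing some $g(k)>0$ with $\int_{a/k}^{a}\bigl(\tfrac{a^{2}}{t^{2}}-1\bigr)w(a,t)\dt\ \ge\ g(k)\cdot a$.

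First I would note that the integrand is already nonnegative on all of $(a/k,a)$ because $t\le a$ gives $\tfrac{a^{2}}{t^{2}}\ge 1$ — but this only yields the non‑strict inequality. To manufacture a genuinely positive gap I would throw away the part of the integral near $t=a$, keeping only $t\in[a/k,\tfrac{k+1}{2k}a]$; on that sub‑interval $t$ is bounded away from $a$, so $\tfrac{a^{2}}{t^{2}}\ge\tfrac{4k^{2}}{(k+1)^{2}}$, and this quantity is strictly larger than $1$ precisely because $k>1$. Discarding the remaining (nonnegative) contribution therefore leaves the bound $\bigl(\tfrac{4k^{2}}{(k+1)^{2}}-1\bigr)\int_{a/k}^{\frac{k+1}{2k}a}w(a,t)\dt$.

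It remains to show that this surviving integral is at least a positive constant times $a$. Here scalability enters a second time: $w(a,t)=w(1,t/a)$, so substituting $t=ay$ turns the integral into $a\int_{1/k}^{(k+1)/(2k)}w(1,y)\,dy$. The hypothesis ``$w(x,a)>0$ for all $a>0$ and $x<ka$'', read with first coordinate $1$ and second coordinate $y$, says exactly that $w(1,y)>0$ whenever $y>1/k$; and since $\tfrac{k+1}{2k}-\tfrac1k=\tfrac{k-1}{2k}>0$, we are integrating a positive function over an interval of positive length, so $\int_{1/k}^{(k+1)/(2k)}w(1,y)\,dy>0$. Taking $g(k)=\bigl(\tfrac{4k^{2}}{(k+1)^{2}}-1\bigr)\int_{1/k}^{(k+1)/(2k)}w(1,y)\,dy$ then closes the argument.

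I do not expect a real obstacle here, only two points that need care. The first is the bookkeeping in the substitution — getting the orientation of the limits right and the Jacobian $a^{2}/t^{2}$ — together with checking that the resulting $g(k)$ is independent of $a$; the latter is precisely what scalability buys, since every integral that appears scales linearly in $a$. The second, and the only conceptual step, is recognizing that no positive gap can be squeezed out near the endpoint $t=a$ where $\tfrac{a^{2}}{t^{2}}\to1$, so one must deliberately sacrifice part of the integration range in exchange for a multiplier that is bounded away from $1$, and then rely on the positivity hypothesis to keep the retained integral from vanishing.
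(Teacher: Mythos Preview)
Your proposal is correct and follows essentially the same approach as the paper: the substitution $x=a^{2}/t$ together with scalability, the same splitting point $\tfrac{k+1}{2k}a$, the same lower bound $\tfrac{4k^{2}}{(k+1)^{2}}$ on the Jacobian over the retained subinterval, and the same final expression for $g(k)$. The only difference is cosmetic---you first subtract $\int_{a/k}^{a}w(a,t)\dt$ and then bound the nonnegative remainder, whereas the paper splits the weighted integral and recombines---but the content is identical.
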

\subparagraph{}%space
\begin{theorem}\label{thm3}
Let $M=(w,p)$ be deterministic, truthful, envy-free, individually rational, and anonymous. If $w$ is scalable, then for two machines of different speeds, $w$ allocates all jobs to the quickest machine.
\begin{proof}
Let $L$ denote the total length of jobs. First, consider two machines of speed $t_{1}=1$ and $t_{2}=a$ $(a>1)$.
Since $M$ is truthful, envy-free, and anonymous, by Lemma~\ref{lemma_envy}, we have
\begin{equation}\label{equ3}
h(a)-h(1)~\leq~L+(a-1)L~=~ L\cdot a
\end{equation}
Since $w$ is individually rational, $h(1)\geq\int_{0}^{\infty}w(x,1)\geq0$. We will show that $w(ka,a)=0$ for any $k>1$. For a contradiction, assume $w(ra,a)>0$ for some $r>1$. Let $k$ be such that $w(x,a)>0$ for $x<ka$ and $w(x,a)=0$ for $x>ka$. By monotonicity, such a $k$ exists. By the assumption that $w(ra,a)>0$ for some $r>1$, we know that $k>1$. Since $w$ is scalable, we have for any $x>ka$, $w(y,a)=w(a,x)=L$ if $y/a=a/x$, i.e. $y=a^{2}/x<a/k$. Therefore,
\begin{align}
h(a)&~\geq~\int_{0}^{\infty}w(x,a)\dx&\nonumber\\
&~=~\int_{0}^{\frac{a}{k}}L\dx+\int_{\frac{a}{k}}^{a}w(x,a)\dx+\int_{a}^{ka}w(x,a)\dx&\nonumber\\
&~\geq~\frac{L}{k}a+\int_{\frac{a}{k}}^{a}w(x,a)\dx+\int_{\frac{a}{k}}^{a}w(a,x)\dx+g(k)\cdot a&\tag{Lemma~\ref{lemma_int}}\nonumber\\
&~\geq~\frac{L}{k}a+\int_{\frac{a}{k}}^{a}\left(w(x,a)+w(a,x)\right)\dx+g(k)\cdot a&\nonumber\\
&~=~\frac{L}{k}a+\int_{\frac{a}{k}}^{a}L\dx+g(k)\cdot a&\nonumber\\
&~=~(L+g(k))a&\label{equstar}
\end{align}
Take $a>h(1)/g(k)$. We have $h(a)>aL+h(1)$ from (\ref{equstar}). This contradicts~(\ref{equ3}).
\end{proof}
\end{theorem}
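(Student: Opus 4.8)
The plan is to argue by contradiction: assume that for two machines of distinct speeds the allocation does \emph{not} place all jobs on the faster one, and show that the individual-rationality term $h$ is then forced to grow superlinearly in the speed ratio, contradicting the linear-in-speed upper bound on $h$ imposed by envy-freeness. By scalability I normalize the speeds to $t_1=1$ and $t_2=a$ with $a>1$, so machine $1$ is the quickest; writing $L$ for the total job length, the goal is $w(a,1)=0$ (equivalently $w(1,a)=L$), and more generally I will establish $w(ka,a)=0$ for every $k>1$ and every $a>0$. Suppose not: then $w(ra,a)>0$ for some $r>1$, and by monotonicity (Lemma~1) there is a threshold $k$ with $w(x,a)>0$ for $x<ka$ and $w(x,a)=0$ for $x>ka$; since $w(ra,a)>0$ we have $k\ge r>1$, and by scalability (using $w(x,a)=w(x/a,1)$) this same $k$ works for every $a$, so $k$ is a fixed constant.

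Next I set up a two-sided estimate on $h(a)$, which is well defined by Lemma~\ref{lemma_1}. For the \emph{upper bound}, apply Lemma~\ref{lemma_envy} to $t=(1,a)$ and use $w_1\le L$ to get $h(a)-h(1)\le L\cdot 1+(a-1)L=La$, hence $h(a)\le La+h(1)$, where $h(1)\ge 0$ is a fixed finite constant by individual rationality. For the \emph{lower bound}, use individual rationality again, $h(a)\ge\int_0^\infty w(x,a)\,dx$, and split the integral at $a/k$ and at $a$ (the tail beyond $ka$ vanishes). On $(0,a/k)$ the complementary bid $x'=a^2/x$ exceeds $ka$, so the machine bidding $a$ receives everything; scalability then gives $w(x,a)=L$ on $(0,a/k)$, contributing $\tfrac{L}{k}a$. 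On $(a,ka)$ I invoke Lemma~\ref{lemma_int} (its hypothesis $w(x,a)>0$ for $x<ka$ holds for all $a$ by the choice of $k$) to bound $\int_a^{ka}w(x,a)\,dx\ge\int_{a/k}^a w(a,x)\,dx+g(k)a$ with $g(k)>0$. Combining, and using $w(x,a)+w(a,x)=L$ for distinct bids, the middle terms assemble into $\int_{a/k}^a L\,dx=La-\tfrac{L}{k}a$, which cancels the $\tfrac{L}{k}a$ from the first piece and leaves $h(a)\ge(L+g(k))a$.

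Finally, since $k$ and hence $g(k)$ do not depend on $a$, choosing $a>h(1)/g(k)$ gives $(L+g(k))a>La+h(1)$, contradicting the envy-freeness upper bound. Therefore $w(ka,a)=0$ for all $k>1$ and $a>0$, i.e.\ the allocation sends all jobs to the faster machine.

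I expect the main obstacle to be Lemma~\ref{lemma_int}: the entire argument hinges on converting scalability into a \emph{strictly positive} surplus $g(k)\cdot a$ in the lower bound on $h(a)$. Without that surplus the two estimates would reduce to merely $h(a)\le La+h(1)$ versus $h(a)\ge La$, which is not a contradiction; it is precisely the extra $g(k)a$ that makes the lower bound outgrow the linear upper bound as $a\to\infty$. A secondary point needing care is the verification that the threshold $k$ is genuinely scale-invariant — so that $g(k)$ may be held fixed while $a$ is sent to infinity — together with the (routine) observation that $h(1)$ is finite.
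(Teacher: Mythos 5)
Your proposal is correct and follows essentially the same route as the paper's proof: the envy-freeness upper bound $h(a)\le La+h(1)$ from Lemma~\ref{lemma_envy} against the individual-rationality lower bound $h(a)\ge(L+g(k))a$ obtained by splitting $\int_0^\infty w(x,a)\,dx$ at $a/k$ and $a$ and invoking Lemma~\ref{lemma_int}, with a contradiction for large $a$. Your explicit remark that scalability makes the threshold $k$ independent of $a$ is a point the paper leaves implicit, and it is indeed needed to hold $g(k)$ fixed as $a\to\infty$.
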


\section{Payments for Known Allocation Rules}
Although the VCG mechanism is truthful, envy-free, individually rational, and anonymous, it does not have a good approximation guarantee for makespan. In \cite{kovacs}, the \lpt algorithm is described and shown to be monotone decreasing. In this section, we will show that \lpt is locally efficient and no payment function can form an envy-free, truthful, individually rational, and anonymous mechanism with the \lpt algorithm. We also prove a similar result for randomized mechanisms in Appendix B: the randomized 2-approximation algorithm in~\cite{Archer,at}, whose expected allocation is monotone decreasing and locally efficient,  admits no payment function that can make it simultaneously truthful-in-expectation, envy-free-in-expectation, individually rational, and anonymous. We can show similar results with similar proofs for the allocations in~\cite{aas, ck10, dddr}. 

The \lpt algorithm is the following: Let $w^j_i$ be the workload of machine $i$ before job $j$ is assigned. Assume the jobs are indexed so that $l_1\geq l_2\geq\ldots\geq\l_m$. Note that this algorithm rounds the speeds and hence is not scalable.\par
 \begin{algorithm}
 \caption{\lpt Algorithm}
 \begin{algorithmic}[1]
 \STATE Define rounded speed of machine $i$ to be $s_i:=2^{\lceil\log b_i\rceil}$.
 \FOR {$j=1$ to $m$}
 \STATE Assign job $j$ to machine $i$ that minimizes $(w^j_i+l_j)\cdot s_i$.
 \ENDFOR
 \STATE Among machines of same rounded speed, reorder bundles on these machines so that a machine with smaller bid gets more jobs.
 \end{algorithmic}
 \end{algorithm}
 \begin{lemma}\label{lemma41}
\lpt is locally efficient.
\begin{proof}
We need to show that $w_{i}(b)\leq w_{k}(b)$ for any $b_{i}> b_{k}$. If $s_{i}=s_{k}$, then step 5 ensures $w_{i}(b)\leq w_{k}(b)$. So suppose $s_{i}>s_{k}$.
Consider the last job, $j$, assigned to machine $i$. Since job $j$ is assigned to machine $i$ rather than machine $k$, it should be that
$$(w^{j}_{i}+l_{j})s_{i}\leq (w^{j}_{k}+l_{j})s_{k},$$
where $w^{j}_{i}$ is the workload of machine $i$ before job j is assigned.
Thus,
$$w^{j}_{k}+l_{j}\geq \frac{s_{i}}{s_{k}}(w^{j}_{i}+l_{j})\geq 2(w^{j}_{i}+l_{j}).$$
That is
$w^{j}_{k}\geq 2w^{j}_{i}+l_{j}.$
Since $w_{k}(b)\geq w^{j}_{k}$ and $w_{i}(b)=w^{j}_{i}+l_{j}$, we get $w_{k}(b)\geq w_{i}(b)$.
\end{proof}
\end{lemma}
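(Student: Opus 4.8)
The plan is to invoke the rearrangement-inequality reading of local efficiency: the identity permutation minimizes $\sum_i b_i\, w_{\pi(i)}(b)$ over all $\pi$ exactly when the workloads are sorted oppositely to the bids, so it suffices to prove $w_i(b)\le w_k(b)$ whenever $b_i>b_k$. Because the rounded speed $s_i=2^{\lceil\log b_i\rceil}$ is nondecreasing in the bid, $b_i>b_k$ gives $s_i\ge s_k$, and I would treat the two cases $s_i=s_k$ and $s_i\ge 2s_k$ separately.

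The case $s_i=s_k$ is handled directly by Step 5 of the algorithm, which reassigns the bundles on machines of a common rounded speed so that a smaller bid ends up with a (weakly) larger workload; since $b_i>b_k$ this forces $w_i(b)\le w_k(b)$. For $s_i\ge 2s_k$ I would first set Step 5 aside and reason about the assignment $W$ produced by the greedy phase (Steps 1--4), establishing the stronger claim that $W_{i'}\le W_{k'}$ for \emph{every} machine $i'$ of rounded speed $s_i$ and \emph{every} machine $k'$ of rounded speed $s_k$. If $i'$ receives no job this is immediate; otherwise, letting $j$ be the last job greedily placed on $i'$, we have $W_{i'}=w^j_{i'}+l_j\ge l_j$, and greedy minimality at the step where $j$ is assigned gives $(w^j_{i'}+l_j)s_i\le (w^j_{k'}+l_j)s_k$. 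Combining this with $w^j_{k'}\le W_{k'}$, the factor $s_i/s_k\ge 2$, and $l_j\le W_{i'}$ is a one-line calculation that collapses to $W_{i'}\le W_{k'}$.

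Finally, since Step 5 only rearranges bundles within each rounded-speed class, it leaves the maximum workload among rounded-speed-$s_i$ machines and the minimum workload among rounded-speed-$s_k$ machines unchanged; together with the class-wide inequality above this yields $w_i(b)\le w_k(b)$ for the original machines $i$ and $k$. I expect the main obstacle to be precisely this bookkeeping around Step 5: the greedy rule alone does not order machines of different rounded speeds consistently with their bids once bundles are shuffled, so the inequality has to be proved uniformly over a whole rounded-speed class rather than just for the named pair $i,k$. Everything else reduces to the short manipulation of the greedy inequality, which uses only that distinct rounded speeds differ by a factor at least two and that $l_j$ is part of $i'$'s bundle.
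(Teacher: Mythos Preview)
Your argument is correct and follows essentially the same route as the paper: split into the cases $s_i=s_k$ (handled by Step~5) and $s_i\ge 2s_k$, and in the latter case apply the greedy selection inequality at the moment the last job lands on the slower-class machine, exploiting that distinct rounded speeds differ by at least a factor of two. The core manipulation $(w^j_{i'}+l_j)s_i\le(w^j_{k'}+l_j)s_k\Rightarrow W_{k'}\ge w^j_{k'}\ge W_{i'}$ is exactly the paper's computation.

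Where you differ is in rigor rather than strategy: the paper applies the greedy inequality directly to the named machines $i$ and $k$ and writes $w_i(b)=w^j_i+l_j$, $w_k(b)\ge w^j_k$, tacitly ignoring that Step~5 may have swapped the bundles of $i$ and $k$ with other machines in their respective rounded-speed classes. Your formulation---proving $W_{i'}\le W_{k'}$ for \emph{every} pair drawn from the two classes, and then observing that Step~5 merely permutes within classes---closes this gap cleanly. So your proposal is the paper's argument with the Step~5 bookkeeping made explicit, which is an improvement.
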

\begin{theorem}\label{thm2}
There is no payment function that will make $\lpt$simultaneously truthful, envy-free, individually rational, and anonymous.
\begin{proof}
Let $w$ denote the allocation of the \lpt algorithm. For a contradiction, assume there exists a payment function $p$ such that mechanism $M=(w,p)$ is truthful, envy-free, individually rational, and anonymous.\\
Apply $M$ to the problem of two jobs with lengths $l_{1}=2$ and $l_{2}=1$, and two machines with speeds $t_{1}=1,t_{2}=a$ where $a>1$ and $a$ is a power of 2. By Lemma~\ref{lemma_envy}, we have
\begin{equation}\label{thm2eq1}
h(a)-h(1)\leq 3+(a-1)\cdot 3=3a.
\end{equation}
Since $M$ is individually rational, we also have
\begin{eqnarray*}
h(a)&\geq&\int_{0}^{\infty}w(x,a)\dx\\
&\geq&\int_{0}^{\frac{a}{4}}w(x,a)\dx+\int_{\frac{a}{4}}^{a}w(x,a)\dx+\int_{a}^{2a}w(x,a)\dx\\
&\geq&\frac{a}{4}w(\frac{a}{4},a)+\int_{\frac{a}{4}}^{a}w(x,a)\dx+a\cdot w(2a,a),\\
\end{eqnarray*}
where the last inequality follows the monotonicity of $w$. Since $a$ is a power of 2, the $\lpt$ algorithm ensures $w(\frac{a}{4},a)=3$ and $w(2a,a)=1$. Since $w$ is locally efficient, for any $\frac{a}{4}<x<a$, a machine with speed $x$ gets at least job one, i.e., $w(x,a)\geq2$. Therefore, $h(a)\geq \frac{a}{4}\cdot 3+\frac{3a}{4}\cdot 2+a\cdot 1=\frac{13}{4}a$.
Now take $a=8h(1)$, we have $h(a)\geq \frac{13}{4}a=26h(1)$ and $h(a)\leq h(1)+3a=25h(1)$ from~(\ref{thm2eq1}), a contradiction.
\end{proof}
\end{theorem}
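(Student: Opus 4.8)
The plan is to pit the envy-freeness constraint against the individual-rationality constraint, both expressed through a single function $h$. By the truthfulness characterization together with Lemma~\ref{lemma_1}, anonymity collapses the $h_i$'s to one function $h$ in (\ref{equic}); Lemma~\ref{lemma_envy} then gives an \emph{upper} bound $h(t_{-i})-h(t_{-j})\le L\,t_i+(t_j-t_i)w_i(t)$, while (\ref{equir}) gives a \emph{lower} bound $h(t_{-i})\ge\int_0^\infty w_i(u,t_{-i})\,du$. The strategy is to exhibit a two-machine instance in which LPT* is forced, by local efficiency (Lemma~\ref{lemma41}) and monotonicity, to place enough workload on the fast machine over a whole range of its bids that the lower bound on $h$ outstrips the upper bound.

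First I would take the smallest instance with the needed asymmetry: $m=2$ and two jobs of lengths $l_1=2$, $l_2=1$ (so $L=3$), with true speeds $t_1=1$ for the fast machine and $t_2=a$ for the slow one, where $a>1$ is a power of $2$ so the rounding step of LPT* is transparent. Applying Lemma~\ref{lemma_envy} to this $t$ (machine $2$ not envying machine $1$) and using $0\le w_1(t)\le L=3$ gives the clean upper bound $h(a)-h(1)\le 3+(a-1)\cdot 3=3a$.

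For the lower bound I would invoke (\ref{equir}), $h(a)\ge\int_0^\infty w(x,a)\,dx$, and split the integral at the critical speeds $a/4$, $a$, $2a$. Over $(0,a/4)$ monotonicity gives $w(x,a)\ge w(a/4,a)$; over $(a,2a)$ it gives $w(x,a)\ge w(2a,a)$; and over $(a/4,a)$ the bidding machine is faster than the speed-$a$ machine, so local efficiency forces it to carry at least the big job, $w(x,a)\ge 2$. It remains to pin down the two anchor values straight from the LPT* rule: when the rounded-speed ratio is at least $4$ the greedy step puts both jobs on the fast machine, so $w(a/4,a)=3$; and when the speed-$a$ machine is the faster of the two it gets exactly the single job it can afford, so $w(2a,a)=1$ — here the hypothesis that $a$ is a power of $2$ keeps this arithmetic exact. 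Combining, $h(a)\ge \tfrac{a}{4}\cdot 3+\tfrac{3a}{4}\cdot 2+a\cdot 1=\tfrac{13}{4}a$.

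Finally, choosing $a=8h(1)$ — legitimate since $h(1)\ge 0$ by (\ref{equir}) and $a$ may be rounded up to a power of $2$, which only increases the lower bound — yields $h(a)\ge\tfrac{13}{4}a=26h(1)$ and simultaneously $h(a)\le h(1)+3a=25h(1)$, the contradiction. I expect the only genuine work to be the bookkeeping in the preceding paragraph: verifying exactly which jobs LPT* assigns at the boundary bids $a/4$ and $2a$, and confirming that it is local efficiency (Lemma~\ref{lemma41}), not the bare greedy rule, that licenses $w(x,a)\ge 2$ across $(a/4,a)$ where the reordering step of Step~5 is not in play. The closing remark that similar results hold for the allocations in \cite{aas, ck10, dddr} I would handle identically: each is monotone and locally efficient on two machines, and the same collision of the two inequalities goes through once one exhibits, for each, an analogous asymmetric two-job instance.
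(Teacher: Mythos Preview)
Your proposal is correct and follows essentially the same argument as the paper: the same two-job, two-machine instance, the same upper bound $h(a)-h(1)\le 3a$ from Lemma~\ref{lemma_envy}, the same split of $\int_0^\infty w(x,a)\,dx$ at $a/4,\,a,\,2a$ using monotonicity for the outer pieces and local efficiency for the middle, the same anchor computations $w(a/4,a)=3$ and $w(2a,a)=1$, and the same contradiction at $a=8h(1)$. Your added remark that $a$ should be rounded up to a power of~$2$ (and that this only widens the gap $\tfrac14 a - h(1)$) is a small refinement over the paper, which tacitly ignores this point.
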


Theorem~\ref{thm2} implies that local efficiency and monotonicity of an allocation are not sufficient for the existence of a payment function to form an envy-free, truthful, individually rational, and anonymous mechanism. This insufficiency of monotonicity and local efficiency still holds, even if the allocation function is also scalable. See Proposition~\ref{prop12} in Appendix~\ref{appendixC} for more details.

\section{Open Questions}
In this paper, we establish an approximation lower bound $2-1/m$ for any deterministic, envy-free, truthful, individually rational, and anonymous mechanism while the upper bound is $m$ given by the VCG mechanism. So one open question is whether the VCG mechanism is the best among all truthful and envy-free mechanisms.\par
The proof of Lemma~\ref{lemma_envy} implicitly gives a characterization of
\emph{mechanisms} that are truthful, envy-free, and anonymous.
Ideally, there would be a characterization of \emph{allocations}
for which there exist prices that make the resulting mechanism truthful
and envy-free.  Another interesting question is whether there is a
characterization of truthful and envy-free mechanisms for $\qcmax$.

\bibliographystyle{plain} %% plain.bst
\bibliography{AMD} %% AMD.bib

\appendix
\noindent{\bf\Large Appendix}
\section{Randomized mechanisms}
In this section, we define the randomized mechanisms for $\qcmax$ in our setting.
A randomized mechanism $(w,p)$ is \emph{truthful-in-expectation}, if each agent $i$ maximizes his expected utility by bidding his true value $t_{i}$, i.e., for all agent $i$, all possible $t_{i},b_{i}$ and $b_{-i}$, $$\ep{p_{i}(t_{i},b_{-i})-t_{i}w_{i}({t_{i},b_{-i}})}\geq \ep{p_{i}(b_{i},b_{-i})-t_{i}w_{i}({b_{i},b_{-i}})}.$$
A randomized mechanism $(w,p)$ is \emph{envy-free-in-expectation} if no agent wishes to switch his expected allocation and expected payment with another, i.e., for all $i,j\in\{1,\ldots, m\}$ and all bids $b$, $$\ep{p_{i}(b)}-b_{i}\ep{w_{i}(b)}\geq \ep{p_{j}(b)}-b_{i}\ep{w_{j}(b)}.$$
A randomized mechanism $(w,p)$ is anonymous if for every bid vector $b=(b_{1},\ldots,b_{m})$, every $k$ such that $b_{k}$ is unique and every $l\neq k$,
$$\ep{w_{l}(\ldots,b_{k-1},b_{l},b_{k+1},\ldots,b_{l-1},b_{k},b_{l+1},\ldots)}=\ep{w_{k}(b)}$$ and
$$\ep{p_{l}(\ldots,b_{k-1},b_{l},b_{k+1},\ldots,b_{l-1},b_{k},b_{l+1},\ldots)}=\ep{p_{k}(b)}.$$

So we seek randomized mechanisms that are truthful-in-expectation, envy-free-in-expectation, individually rational, and anonymous.
For a randomized mechanism that is truthful-in-expectation, Lemma 1 still holds after replacing all $w_{i}(b)$ with $\ep{w_{i}(b)}$~\cite{at}. By Lemma 1, the payments are deterministic of a randomized mechanism that is truthful-in-expectation. It is easy to check that~(\ref{equir}), Lemma~\ref{lemma_1} and Lemma~\ref{lemma_envy} still hold after replacing the allocation of a machine with its expected allocation for all machines.

\section{}
In \cite{at}, a randomized 3-approximation algorithm (see Algorithm 2) is described and its expected allocation is shown to be monotone decreasing. In this section, we will show that its expected allocation is locally efficient and there is no payment that makes it a truthful-in-expectation, envy-free-in-expectation, individually rational and anonymous mechanism.

 \begin{algorithm}
 \caption{Randomized 3-approximation Algorithm by Archer and Tardos~\cite{at}}
 \begin{algorithmic}[1]
 \REQUIRE Jobs and machines are indexed so that $l_1\geq l_2\geq\ldots\geq\l_n$ and $b_{1}\leq b_{2},\ldots,\leq b_{m}$.
 \STATE Compute a lower bound of makespan $$T_{LB}(b)=\max_{j}\min_{i}\max
 \left \{b_{i}l_{j},\frac{\sum_{k=1}^{j}l_{k}}{\sum_{r=1}^{i}\frac{1}{b_{r}}}\right\}$$
  \STATE For each machine $i$, create a bin $i$ of size $s_{i}(b)=T_{LB}(b)/b_{i}$.
 \STATE Assign jobs $1,2,\ldots,(k-1)$ to bin 1, where  $k$ is the first job that would cause the bin to overflow. Then assign to bin 1 a piece of job $k$ exactly as large as the remaining capacity of bin 1. Continue by assigning jobs to bin 2, starting with the rest of job $k$ and so on, until all jobs are assigned.
 \STATE For each job $j$, assign job $j$ to machine $i$ with probability equal to the proportion of job $j$ that is fractionally assigned to bin $i$.
 \end{algorithmic}
 \end{algorithm}

\begin{theorem}
There is no payment function that will make Algorithm 2 simultaneously truthful-in-expectation, envy-free-in-expectation, individually rational, and anonymous.
\begin{proof}
Let $w$ denote the allocation of Algorithm 2. For a contradiction, assume there exists a payment function $p$ such that mechanism $M=(w,p)$ is truthful-in-expectation, envy-free-in-expectation, individually rational and anonymous.\\

Apply $M$ to the problem of two jobs with lengths $l_{1}=2$ and $l_{2}=1$, and two machines with speeds $t_{1}=1,t_{2}=a$ where $a>1$. By Lemma~\ref{lemma_envy}, we have
\begin{equation}\label{equat1}
h(a)-h(1)\leq 3+(a-1)\cdot 3=3a,
\end{equation}
where $h(1)\geq \int_{0}^{\infty}w(x,1)\dx\geq 0$. The  expected workload of machine $i$ is equal to the size of bin $i$ by step 4, i.e., $\ep{w_{i}(b)}=s_{i}(b)$. Algorithm 2 also ensures that for all $c>0$,
$$s_{i}(c\cdot b)=\frac{T_{LB}(c\cdot b)}{c\cdot b_{i}}=\frac{c\cdot T_{LB}(b)}{c\cdot b_{i}}=s_{i}(b).$$
So we get $\ep{w_{i}(c\cdot b)}=\ep{w_{i}(b)}$ for all $c>0$. This implies, for two machines,
$$\int_{0}^{\infty}\ep{w(x,a)}\dx~=~\int_{0}^{\infty}\ep{w(\frac{x}{a},1)}\dx~=~a\int_{0}^{\infty}\ep{w(t,1)}\dt,$$
where the last equality is obtained by substituting $x$ with $at$. Since $M$ is individually rational, we have
\begin{eqnarray*}
h(a)~\geq~\int_{0}^{\infty}\ep{w(x,a)}\dx~=~a\int_{0}^{\infty}\ep{w(x,1)}\dx~=~(3.5+\ln3-\ln2)a,
\end{eqnarray*}
where we get the last equation by computing  $\int_{0}^{\infty}\ep{w(x,1)}\dx$ using the allocation of Algorithm 2. Now take $a=2h(1)$, we have $h(a)\geq (3.5+\ln3-\ln2)a>7h(1)$, contradicting~(\ref{equat1}).
\end{proof}
\end{theorem}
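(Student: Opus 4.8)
The plan is to reuse the template of Theorem~\ref{thm2} and Theorem~\ref{thm3}, now in the in‑expectation setting. Assume for contradiction that some payment function $p$ makes $M=(w,p)$ truthful‑in‑expectation, envy‑free‑in‑expectation, individually rational, and anonymous, where $w$ is the expected allocation of Algorithm~2; by step~4, $\ep{w_i(b)}$ is exactly the size $s_i(b)=T_{LB}(b)/b_i$ of bin~$i$. As recorded in Appendix~A, Lemma~\ref{lemma_1}, Lemma~\ref{lemma_envy}, and the individual‑rationality inequality~(\ref{equir}) all survive verbatim once $w_i(b)$ is replaced by $\ep{w_i(b)}$, so the truthful payments are again determined by a single function $h$. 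I would test $M$ on the instance with jobs $l=(2,1)$, so $L=3$, and speeds $t=(1,a)$ with $a>1$ to be chosen large. Applying Lemma~\ref{lemma_envy} with $i=1,j=2$ and using $\ep{w_1(t)}\le L$ gives the upper bound $h(a)-h(1)\le L+(a-1)L=3a$.

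For the lower bound I would invoke individual rationality, $h(a)\ge\int_0^\infty\ep{w(x,a)}\dx$ and $h(1)\ge 0$, together with the scalability of the expected allocation: since $T_{LB}$ is homogeneous of degree one in the bids, $s_i(c\cdot b)=T_{LB}(c\cdot b)/(c\,b_i)=s_i(b)$, hence $\ep{w(x,a)}=\ep{w(x/a,1)}$. Substituting $u=x/a$ yields $\int_0^\infty\ep{w(x,a)}\dx=a\,C$ with $C:=\int_0^\infty\ep{w(u,1)}\,du$, so $h(a)\ge Ca$.

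What remains is to evaluate $C$ — equivalently, to lower bound it by any number exceeding $L=3$. For the fixed jobs $l=(2,1)$ and machines of speeds $(u,1)$ one computes $T_{LB}((u,1))$ and then the bin size of the $u$‑machine as $u$ sweeps $(0,\infty)$; this is a piecewise integral with a handful of regimes (the $u$‑machine much faster than, comparable to, and much slower than the other), and I expect the answer to be $C=3.5+\ln 3-\ln 2\approx 4.9$. With $C>L$, choosing $a$ large enough — any $a>\max\{1,\,h(1)/(C-L)\}$ works, and $a=2h(1)$ suffices when $h(1)$ is not tiny — forces $Ca\le h(a)\le La+h(1)<Ca$, a contradiction, completing the proof. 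One should also note the companion fact, promised in the section, that this expected allocation is locally efficient: a smaller bid yields a larger bin and the fractional next‑fit of step~3 fills the cheaper bins first, so by Lemma~\ref{lemma_1} envy‑free‑in‑expectation payments do exist, which is what makes the impossibility statement non‑vacuous.

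The main obstacle is the explicit computation of $C$: one has to track how the fractional next‑fit packing of $l=(2,1)$ into the two bins changes across the whole range of the bidding machine's speed, in particular the threshold speeds at which the second bin first receives mass and at which the faster bin can no longer hold job~$1$ alone. Everything else is a routine transcription of the Theorem~\ref{thm2}/Theorem~\ref{thm3} argument plus the Appendix~A observation that the characterization lemmas hold in expectation.
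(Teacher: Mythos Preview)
Your proposal is correct and follows the paper's own proof essentially line for line: the same two–job instance $l=(2,1)$ on speeds $(1,a)$, the upper bound $h(a)\le 3a+h(1)$ from Lemma~\ref{lemma_envy}, the lower bound $h(a)\ge aC$ from individual rationality plus the scale invariance of the bin sizes, the constant $C=3.5+\ln 3-\ln 2$, and the final choice $a=2h(1)$. The only slip is the numerical approximation --- $3.5+\ln 3-\ln 2\approx 3.9$, not $4.9$ --- but this is immaterial since all that matters is $C>L=3$.
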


Since the expected workload of machine $i$ equals to $s_{i}(b)=T_{LB}(b)/b_{i}$, we get $\ep{w_{i}(b)}=T_{LB}(b)/b_{i}\geq T_{LB}(b)/b_{j}=\ep{w_{j}(b)}$ for $b_{i}\leq b_{j}$. So the expected allocation of Algorithm 2 is locally efficient.

\section{}\label{appendixC}
\begin{prop}\label{prop12}
Local efficiency, monotonicity and scalability of an allocation are not sufficient for the existence of a payment function to form an envy-free, truthful, individually rational, and anonymous mechanism.
\begin{proof}
Define allocation $w$ for 2 machines to be the allocation that minimizes the makespan. If there are more than one such allocations, let $w$ be the one that also minimizes the total completion time.
It is easy to verify that $w$ is unique. Hence $w$ is well-defined and anonymous. $w$ is also locally efficient and scalable, since it minimizes the makespan. Now, we show that $w$ is monotone.
Let the allocation of $w$ for 2 machines with bids $(b_1,b_2)$ be $\mathcal{O}=(L_1,L_2)$. Assume w.l.o.g that $b_1L_1\geq b_2L_2$. If machine 1 increases its bid, its allocation can only go down.
Consider the allocation of $w$: $\mathcal{O'}=(L'_1,L'_2)$ for 2 machines with bids $(b_1,b'_2)$, where $b'_2>b_2$. For a contradiction, assume $L'_2>L_2$. So $L'_1<L_1$. If the makespan of $\mathcal{O'}$ is
$b_1L'_1$, i.e. $b_1L'_1\geq b'_2L'_2$, then we have $b_2L'_2<b'_2L'_2\leq b_1L'_1<b_1L_1$. So $\max\{b_2L'_2,b_1L'_1\}<b_1L_1$. It contradicts the optimality of $\mathcal{O}$.
If the makespan of $\mathcal{O'}$ is $b'_2L'_2$, i.e. $b'_2L'_2\geq b_1L'_1$, then $b'_2L'_2\leq \max\{b_1L_1,b'_2L_2\}$ by the optimality of $\mathcal{O'}$.
Since $b'_2L'_2>b'_2L_2$, we have $b'_2L'_2\leq b_1L_1$. So $b_2L'_2<b_1L_1$. Since $b_1L'_1<b_1L_1$, we have $\max\{b_2L'_2,b_1L'_1\}<b_1L_1$. It contradicts the optimality of $\mathcal{O}$.
Therefore, $L'_2\leq L_2$ and $w$ is monotone.
Since $w$ is different from the VCG allocation for 2 machines, this theorem follows from Theorem~\ref{thm3}.
\end{proof}
\end{prop}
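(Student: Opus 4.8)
The plan is to produce a single allocation that has all three hypothesized properties yet violates the conclusion of Theorem~\ref{thm3}, so that by that theorem it cannot be completed to a truthful, envy-free, individually rational, anonymous mechanism. The natural candidate on two machines is the \emph{makespan-minimizing} allocation $w$, with ties broken by minimizing the total completion time $b_1w_1+b_2w_2$ (and, for equal speeds, a further fixed symmetric rule). First I would argue $w$ is well-defined and anonymous: among the finitely many workload vectors there is a nonempty minimal-makespan set, and on it, since $w_1+w_2$ equals the fixed total length $L$, the quantity $b_1w_1+b_2w_2=(b_1-b_2)w_1+b_2L$ pins down $w_1$ uniquely whenever $b_1\neq b_2$; so for machines of distinct speeds $w$ is a single well-defined vector, and the whole construction is symmetric in the two machines, hence anonymous.

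Next I would verify the three structural properties. Scalability is immediate, since replacing $b$ by $c\,b$ with $c>0$ multiplies every candidate makespan and every candidate total completion time by $c$, leaving the optimizer unchanged. Local efficiency on two machines reduces to $w_1(b)\le w_2(b)$ whenever $b_1\ge b_2$: if instead $L_1=w_1(b)>w_2(b)=L_2$, the swapped assignment $(L_2,L_1)$ has makespan $\max\{b_1L_2,b_2L_1\}\le b_1L_1\le\max\{b_1L_1,b_2L_2\}$, so it is also makespan-optimal, but its total completion time differs from that of $w$ by $-(b_1-b_2)(L_1-L_2)\le 0$, contradicting the tie-break when $b_1>b_2$ (the case $b_1=b_2$ being vacuous for local efficiency). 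Monotonicity is the substantive step: fix $(b_1,b_2)$ with optimal split $(L_1,L_2)$, raise machine $2$'s bid to $b_2'>b_2$ with new optimal split $(L_1',L_2')$, and suppose for contradiction $L_2'>L_2$, so $L_1'<L_1$; a case split on whether the makespan of the perturbed instance is attained at $b_1L_1'$ or at $b_2'L_2'$, combined with the optimality of $(L_1',L_2')$ for $(b_1,b_2')$ and with $b_2<b_2'$, yields in both cases $\max\{b_1L_1',b_2L_2'\}<b_1L_1\le\max\{b_1L_1,b_2L_2\}$, contradicting optimality of $(L_1,L_2)$ for $(b_1,b_2)$. By anonymity the same statement for machine $1$ follows by symmetry, so $w$ is monotone decreasing.

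Finally I would close the argument by invoking Theorem~\ref{thm3}: that theorem states that any scalable allocation admitting a truthful, envy-free, individually rational, anonymous payment must, on two machines of different speeds, assign all jobs to the quickest machine. But the makespan-minimizing $w$ does not do this --- with two unit-length jobs and speeds $(1,1+\epsilon)$ it assigns one job to each machine, whereas the VCG rule gives both to the speed-$1$ machine for makespan $2$. Hence $w$ is monotone, locally efficient, and scalable, yet no payment makes $(w,p)$ truthful, envy-free, individually rational, and anonymous, which is exactly the claimed insufficiency. I expect the monotonicity case analysis to be the only delicate part; the remaining work is bookkeeping, with one point worth care --- pinning down the tie-break (in particular for equal speeds) so that uniqueness, anonymity, and scalability are all preserved simultaneously.
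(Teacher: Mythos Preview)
Your proposal is correct and follows essentially the same approach as the paper: construct the makespan-minimizing allocation on two machines with the total-completion-time tie-break, verify scalability, local efficiency, and monotonicity (via the same two-case contradiction on which machine attains the makespan in $\mathcal{O}'$), and then invoke Theorem~\ref{thm3}. Your write-up is in fact more explicit than the paper's in justifying uniqueness and local efficiency, and in exhibiting a concrete instance (two unit jobs, speeds $(1,1+\epsilon)$) where $w$ differs from the VCG allocation; the paper simply asserts these points.
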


\end{document}